\newtheorem{theorem}{Theorem}
\newtheorem{lemma}{Lemma}
\newtheorem{assumption}{Assumption}
\newtheorem{corollary}{Corollary} 
\newtheorem{remark}{Remark}
\newtheorem{definition}{Definition} 
\begin{document}
\title{Data-driven Distributed and Localized Model Predictive Control}
\author{Carmen~Amo~Alonso\thanks{Denotes equal contribution. $^1$C. Amo Alonso is with the Dept. of Computing and Mathematical Sciences, California Institute of Technology, Pasadena, CA. $^2$F. Yang is with the Dept. of Computer and Information Sciences, University of Pennsylvania, Philadelphia, PA. $^3$N. Matni is with the Dept. of Electrical and Systems Engineering, University of Pennsylvania, Philadelphia, PA. }$^{\ 1}$, Fengjun Yang$^{*2}$ and Nikolai Matni$^{3}$
\thanks{N. Matni is generously supported by NSF award CPS-2038873, NSF CAREER award ECCS-2045834, and Google Research Scholar Award.}
}
\maketitle

\begin{abstract}

Motivated by large-scale but computationally constrained settings, e.g., the Internet of Things, we present a novel data-driven distributed control algorithm that is synthesized directly from trajectory data. Our method, data-driven Distributed and Localized Model Predictive Control (D$^3$LMPC), builds upon the data-driven System Level Synthesis (SLS) framework, which allows one to parameterize \emph{closed-loop} system responses directly from collected open-loop trajectories. The resulting model-predictive controller can be implemented with distributed computation and only local information sharing. By imposing locality constraints on the system response, we show that the amount of data needed for our synthesis problem is independent of the size of the global system. Moreover, we show that our algorithm enjoys theoretical guarantees for recursive feasibility and asymptotic stability. Finally, we also demonstrate the optimality and scalability of our algorithm in a simulation experiment.

\end{abstract}


\section{Introduction}


Contemporary large-scale distributed systems such as the Internet of Things enjoy ubiquitous sensing and communication, but are locally resource constrained in terms of power consumption, memory, and computation power.  If such systems are to move from passive data-collecting networks to active distributed control systems, algorithmic approaches that exploit the aforementioned advantages subject to the underlying resource constraints of the network must be developed.  Motivated by this emerging control paradigm, we seek to devise a distributed control scheme that is (a) model-free, eliminating the need for expensive system identification algorithms, and (b) scalable in implementation and computation.  Our hypothesis is that for such systems, collecting local trajectory data from a small subset of neighboring systems is a far more feasible approach than deriving the intricate and detailed system models needed by model-based control algorithms. In this paper, we show that such a data-driven distributed control approach can scalably provide optimal performance and constraint satisfaction, along with feasibility and stability guarantees.

\textbf{Prior work:} 
The majority of data-driven control approaches have focused on providing solutions to the linear quadratic regulator (LQR) problem. Among these works, we focus on the \emph{direct methods} that do not require a system identification step \citep{hjalmarsson_iterative_1998,fazel_global_2018,mohammadi_linear_2020,bradtke_adaptive_1994,persis_formulas_2019,trentelman_informativity_2020}. Specifically, we highlight the work in \citet{persis_formulas_2019}, which applies behavioral systems theory to parametrize systems from past trajectories.\footnote{For a more in-depth treatment of behavioral system theory in the context of control problems, interested readers are referred to \citet{markovsky_behavioral_2021,dorfler_certainty_2021} and the references therein.} This idea has then given rise to several different data-enabled Model Predictive Control (MPC) approaches \citep{dorfler_bridging_2021,coulson_distributionally_2021,berberich_data_2020,xue_data_2021}. However, these approaches require gathering past trajectories of the global system, which hinders their scalability and challenges their applicability in the distributed setting. Even though some recent works have been developed where data-driven approaches were applied to the distributed setting, providing theoretical guarantees for these approaches remain difficult. For instance, the work in \citet{alemzadeh_d3pi_2021} provides an algorithm to solve the LQR problem where the dynamic matrices are unknown and communication only occurs at a local scale. This algorithm relies on the existence of ``auxiliary" links among subsystems, which can make its extension to an online approach (eg. MPC) very costly. It is also unclear how theoretical guarantees can be derived for an MPC approach relying on these techniques. On the other hand, some approaches use data-driven formulations to provide theoretical guarantees (recursive feasibility and asymptotic stability) in distributed MPC approaches where providing guarantees with conventional techniques is in general a hard problem and usually results in conservatism \citep{muntwiler_data_2020,sturz_distributed_2020}. However, in these cases knowledge of the system dynamics is assumed. It remains as an open question how to develop a \emph{scalable} distributed MPC approach where the system model is unknown and only \emph{local} measurements are available for each subsystem. It is important that such an approach enjoys the same theoretical guarantees of recursive feasibility and asymptotic stability as standard MPC approaches.

\textbf{Contributions:}  We address this gap and present a data-driven version of the model-based Distributed Localized MPC (DLMPC) algorithm for linear time-invariant systems in a noise-free setting. We rely on recent results on data-driven SLS \citep{xue_data_2021}, which show that optimization problems over system-responses can be posed using only libraries of past system trajectories without explicitly identifying a system model. We extend these results to the localized and distributed setting, where subsystems can only collect and communicate information within a local neighborhood. In this way, the model-based DLMPC problem can equivalently be posed using only \emph{local} libraries of past system trajectories, without explicitly identifying a system model. We then exploit this structure, together with the the separability properties of the objective function and constraints, and provide a distributed synthesis algorithm based on the Alternating Direction Method of the Multipliers (ADMM) \citep{boyd_distributed_2010} where only local information sharing is required. Hence, in the resulting implementation, each sub-controller solves a low-dimensional optimization problem defined over a local neighborhood, requiring only local data sharing and no system model. Since this problem is analogous to the model-based DLMPC problem \citep{amoalonso_distributed_2020,amoalonso_implementation_2021}, our approach directly inherits its guarantees for convergence, recursive feasibility and asymptotic stability \citep{amoalonso_theoretical_2021}. Through numerical experiments, we validate these results and further confirm that the complexity of the subproblems solved at each subsystem does not scale relative to the full size of the system.

\textbf{Paper structure:} In \S II we present the problem formulation.  In \S III, we introduce the necessary preliminaries on the Data-Driven SLS framework. \S IV expands on these results and provides a Data-Driven formulation of SLS that allows for locality constraints to be imposed such that only local information exchange is needed between subsystems. In \S V we apply these results to the Distributed and Localized MPC problem, and provide a distributed algorithm via ADMM that allows the MPC problem to be solved with only local information. In \S VI, we present a numerical experiment and we end in \S VII with conclusions and directions of future work.

{\textbf{Notation:}} 
Lower-case and upper-case Latin and Greek letters such as $x$ and $A$ denote vectors and matrices respectively, although lower-case letters might also be used for scalars or functions (the distinction will be apparent from the context). Bracketed indices denote time-step of the real system, i.e., the system input is  $u(t)$ at time $t$, not to be confused with $x_t$ which denotes the predicted state $x$ at time $t$. Superscripted variables in between curly brackets, e.g. $x^{\{k\}}$,  correspond to the value of $x$ at the $k^{th}$ iteration of a given algorithm. Calligraphic letters such as $\mathcal{S}$ denote sets, and lowercase script letters such as $\mathfrak{c}$ denotes a subset of $\mathbb{Z}^{+}$, i.e., $\mathfrak{c}=\left\{1,...,n\right\}\subset\mathbb{Z}^{+}$. Square bracket notation, i.e., $[x]_{i}$ denotes the components of $x$ corresponding to subsystem $i$ and, by extension $[x]_{j\in\mathcal S}$ denotes the components of $x$ corresponding to every subsystem $j
\in\mathcal S$. Boldface lower and upper case letters such as $\mathbf{x}$ and $\mathbf{K}$ denote finite horizon signals and block lower triangular (causal) operators, respectively:
\begin{equation*} 
\mathbf{x}=\left[\begin{array}{c} x_{0}\\x_{1}\\\vdots\\x_{T}\end{array}\right], ~
\mathbf K =   { {\scriptscriptstyle{\left[\begin{array}{cccc}K_{0}[0] & & & \\ K_{1}[1] & K_{1}[0] & & \\ \vdots & \ddots & \ddots & \\ K_{T}[T] & \dots & K_{T}[1] & K_{T}[0] \end{array}\right]}}},
\end{equation*}
where each $x_i$ is an $n$-dimensional vector, and each $K_{i}[j]$ is a matrix of compatible dimension representing the value of $K$ at the $j^\text{th}$ time-step computed at time $i$. $\mathbf{K}(\mathfrak{r},\mathfrak{c})$ denotes the submatrix of $\mathbf{K}$ composed of the rows and columns specified by $\mathfrak{r}$ and $\mathfrak{c}$ respectively. 

\section{Problem Formulation}

We consider a discrete-time linear time-invariant (LTI) system with dynamics
\begin{equation}\label{eqn:dynamics}
x(t+1) = Ax(t)+Bu(t)
\end{equation}
composed of $N$ interconnected subsystems. Each subsystem $i$ has dynamics
\begin{equation}\label{eqn:dynamics_i}
[x(t+1)]_i = \sum_{j\in\mathcal N_i} [A]_{ij}[x(t)]_i+[B]_{ii}[u(t)]_{i},
\end{equation}
where $[x]_i,[u]_i$ are suitable partitions of the global state $x\in\mathbb R^n$ and control input $u\in\mathbb R^p$ respectively. Similarly, $[A]_{ij}$ and $[B]_{ij}$ are the induced compatible block structure in the system matrices $(A,B)$, and the set $\mathcal N_i$ contains all subsystems $j$ such that $[A]_{ij}\neq0$. Note that we assume that $B$ is block-diagonal.

Our goal is to design a \emph{localized} MPC controller for the system when the system model $(A,B)$ is unknown. In this setup, each subsystem $i$ has access to a collection of past local state and input trajectories, and only local communication is possible among subsystems. To formalize the notion of locality, we model the interconnection topology of the system as a time-invariant unweighted directed graph $\mathcal G(\mathcal V, \mathcal E)$. In this graph, each subsystem $i$ is identified with a vertex $v_i \in \mathcal V$. An edge $e_{ij}\in\mathcal E$ exists whenever $j\in\mathcal N_i$. We say that a system is $d$-\textit{localized} if the sub-controllers are restricted to exchange their measurements and control inputs with neighbors at most $d$ hops away, as measured by the communication topology $\mathcal{G}(\mathcal V, \mathcal E)$. To formalize this idea, we first define the $d$-outgoing, $d$-incoming and $d$-external sets of a subsystem.
\begin{definition}{\label{def: in_out_set}}
For a graph $\mathcal{G}(\mathcal V, \mathcal E)$, the \textit{$d$-incoming set}, \textit{$d$-outgoing set} and \textit{$d$-external set} of subsystem $i$ are respectively defined as
\begin{itemize}
\item $\textbf{out}_{i}(d) := \left\{v_{j}\ |\  \textbf{dist}(v_{i} \rightarrow v_{j} ) \leq d\in\mathbb{N} \right\}$,
\item $\textbf{in}_{i}(d) \ \, := \left\{v_{j}\ |\ \textbf{dist}(v_{j} \rightarrow v_{i} ) \leq d\in\mathbb{N} \right\}$,
\item $\textbf{ext}_{i}(d) := \left\{v_{j}\ |\ \textbf{dist}(v_{j} \rightarrow v_{i} ) > d\in\mathbb{N} \right\}$,
\end{itemize}
where $\textbf{dist}(v_{j} \rightarrow v_{i} )$ denotes the distance between $v_{j}$ and $v_{i}$ i.e., the number of edges in the shortest path connecting $v_{j}$ to $v_{i}$.
\end{definition}
Hence, we can enforce a $d$-local information exchange constraint on the distributed MPC problem -- where the size of the local neighborhood $d$ is a \emph{design parameter} -- by imposing that each sub-controller policy $i$ can be computed using only states and control actions collected from $d$-hop incoming neighbors of subsystem $i$. Mathematically, we restrict the control action $[u_t]_i$ to be a function of the form
\begin{equation}
    [u_t]_i = \gamma^i_{t}\left([x_{0:t}]_{j\in \textbf{in}_i(d)},[u_{0:t-1}]_{j\in \textbf{in}_i(d)}\right),
    \label{eqn:comms}
\end{equation}
for all $t=0,\dots,T$ and $i=1,\dots,N$, where $\gamma_t^i$ is a measurable function of its arguments.  

With this in mind, we formulate the data-driven localized MPC problem. At each time-step, a finite-time constrained optimal control problem with horizon $T$ is solved, where the current state is used as the initial condition. Hence, at time step $\tau$ the MPC subroutine solves
\begin{align} \label{eqn:MPC}
&\hspace{-1mm} \underset{{x}_{t},u_{t}, \gamma_t}{\text{min}} &  &\sum_{t=0}^{T-1}f_{t}(x_{t},u_{t})+f_{T}(x_{T})\\
&\hspace{-1mm} \ \text{s.t.} &  
&x_{0} = x(\tau),\ x(t+1) = Ax(t)+Bu(t),\nonumber\\
&\hspace{-1mm} \ &     & x_{T}\in\mathcal{X}_{T},\, x_{t}\in\mathcal{X}_{t},\, u_{t}\in\mathcal{U}_{t},\ t\in[0,T-1], \nonumber\\
&\hspace{-1mm} \ &     & [u_t]_i = \gamma^i_{t}\left([x_{0:t}]_{j\in \textbf{in}_i(d)},[u_{0:t-1}]_{j\in \textbf{in}_i(d)}\right)\ \forall i,\nonumber
\end{align}
where the matrices $A$ and $B$ are unknown 
at the synthesis time, the $f_t(\cdot,\cdot)$ and $f_T(\cdot)$ are closed, proper, and convex cost functions, and $\mathcal{X}_t$ and $\mathcal{U}_t$ are closed and convex sets containing the origin. Moreover, in order to provide a local synthesis and implementation at each subsystem, we work under suitable structural assumptions between the cost function and state and input constraints.
\begin{assumption}{\label{assump: locality}}
In formulation \eqref{eqn:MPC} the objective function $f_{t}$  is such that $f_{t}(x,u)=\sum f_{t}^i([x]_{i},[u]_{i})$, and the constraint sets are such that $x\in\mathcal{X}_t=\mathcal{X}_{t}^1\times ... \times \mathcal{X}_{t}^N$, where $x \in \mathcal{X}_t$ if and only if $[x]_{i}\in\mathcal{X}_{t}^i$ for all $i$ and $t\in\{0,...,T\}$, and idem for $\mathcal{U}_t$.
\end{assumption}

In what follows we show problem \eqref{eqn:MPC} admits a distributed solution and implementation requiring only local data and no explicit estimate of the system model.

\section{Data-driven System Level Synthesis}\label{sec:sls}

In this section, we  introduce an abridged summary of some of the preliminary work on SLS \citep{wang_separable_2018,anderson_system_2019} and its extension to a data-driven formulation \citep{xue_data_2021} based on the behavioral framework in \citet{willems_introduction_1997}. In following sections, we build on these concepts to provide the necessary results to provide a tractable reformulation of problem \eqref{eqn:MPC}. 

\subsection{System Level Synthesis Parametrization}

The following is adapted from \S2 of \citet{anderson_system_2019}. Consider the dynamics of the LTI system \eqref{eqn:dynamics} evolving over a finite horizon $T$ and subject to an additive disturbance $w(t)$ at each time-step $t$. We can compactly express the dynamics as 
\begin{equation}\label{eqn:dynamics_compact}
\mathbf{x} = Z(\hat{A}\mathbf{x}+\hat{B}\mathbf{u})+\mathbf{w},
\end{equation}
where $\mathbf{x}$, $\mathbf{u}$ and $\mathbf{w}$ are the finite horizon signals corresponding to state, control input, and disturbance respectively. By convention, we define the disturbance to contain the initial condition, i.e., $\mathbf{w} = [x_{0}^\mathsf{T}\ w_{0}^\mathsf{T}\ \dots \ w_{T-1}^\mathsf{T}]^\mathsf{T}$. Here, $Z$ is the block-downshift matrix,\footnote{Matrix with identity matrices along its first block sub-diagonal and zeros elsewhere.} $\hat{A} := \text{blkdiag}(A, ..., A)$, and $\hat{B} := \text{blkdiag}(B, ..., B, 0)$. 

Let the control input of this system be a causal linear time-varying state-feedback controller, i.e., $u=\mathbf K x$ for controller $\mathbf K$ block-lower triangular. Then the closed-loop response of system \eqref{eqn:dynamics_compact} is given by:
\begin{subequations}\label{eqn:Phis}
\begin{equation}
\mathbf{x} = (I - Z(\hat{A} + \hat{B}\mathbf{K}))^{-1}\mathbf{w} =: \mathbf{\Phi}_x\mathbf{w} 
\end{equation}
\begin{equation}
\mathbf{u} = \mathbf{K}(I - Z(\hat{A} + \hat{B}\mathbf{K}))^{-1}\mathbf{w} =: \mathbf{\Phi}_u\mathbf{w}.
\end{equation}
\end{subequations}

The SLS approach relies on the so-called \emph{system responses} $\mathbf{\Phi}_x$ and $\mathbf{\Phi}_u$ to parametrize the set of achievable closed-loop behaviors of the system. This can be done by virtue of the following theorem:
\begin{theorem}{\label{thm:SLS}} \emph{(Theorem 2.1 of \citet{anderson_system_2019})}
For the system  \eqref{eqn:dynamics} evolving under the state-feedback policy $\mathbf u = \mathbf K \mathbf x$ with $\mathbf{K}$ a block-lower-triangular matrix, the following are true
\begin{enumerate}
    \item The affine subspace
    \begin{equation}\label{eqn:Z_AB}
        Z_{AB}\mathbf{\Phi}:=\left[I-Z\hat A\ \ -Z\hat B\right]\left[\begin{array}{c}\mathbf{\Phi}_{x}\\\mathbf{\Phi}_{u}\end{array}\right] = I
    \end{equation}
    with block-lower-triangular $\{\mathbf{\Phi}_{x},\mathbf{\Phi}_{u}\}$ parameterizes all possible system responses \eqref{eqn:Phis}.
    
    \item For any block lower-triangular matrices $\left\{\mathbf{\Phi}_{x},\mathbf{\Phi}_{u}\right\}$ satisfying \eqref{eqn:Z_AB}, the controller $\mathbf{K} = \mathbf{\Phi}_{u}\mathbf{\Phi}_{x}^{-1}$ achieves the desired response \eqref{eqn:Phis} from $\mathbf w \mapsto (\mathbf x,\mathbf u)$.
\end{enumerate}
\end{theorem}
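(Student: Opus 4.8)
The plan is to prove the two claims as the ``necessity'' and ``sufficiency'' directions of the parametrization: claim (1) asserts that the achievable responses are exactly the block-lower-triangular solutions of the affine subspace \eqref{eqn:Z_AB}, while claim (2) supplies the explicit controller realizing any given feasible pair. I would treat the forward direction (every closed-loop response lands in the subspace), the invertibility of $\mathbf{\Phi}_x$, and then the converse/realization step in turn, since the last two are what make claim (2) well-posed.

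For the forward direction I would simply substitute the definitions from \eqref{eqn:Phis}, namely $\mathbf{\Phi}_x = (I - Z(\hat A + \hat B\mathbf{K}))^{-1}$ and $\mathbf{\Phi}_u = \mathbf{K}\mathbf{\Phi}_x$, into the left-hand side of \eqref{eqn:Z_AB}. Factoring $\mathbf{\Phi}_x$ out on the right collapses $(I - Z\hat A)\mathbf{\Phi}_x - Z\hat B\mathbf{\Phi}_u$ to $[I - Z(\hat A + \hat B\mathbf{K})]\mathbf{\Phi}_x$, which equals $I$ by the very definition of $\mathbf{\Phi}_x$ as that matrix inverse. The block-lower-triangularity is then a structural observation: because $Z$ is the strictly-lower block-downshift and $\hat A,\hat B$ are block-diagonal, $Z(\hat A + \hat B\mathbf{K})$ is strictly block-lower-triangular whenever $\mathbf{K}$ is block-lower-triangular, so $I - Z(\hat A + \hat B\mathbf{K})$ has identity diagonal blocks, is invertible, and has a block-lower-triangular inverse; thus $\mathbf{\Phi}_x$ is block-lower-triangular and $\mathbf{\Phi}_u = \mathbf{K}\mathbf{\Phi}_x$ inherits the structure as a product of such matrices.

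For the converse and claim (2), the key preliminary is that any $\{\mathbf{\Phi}_x,\mathbf{\Phi}_u\}$ satisfying \eqref{eqn:Z_AB} has $\mathbf{\Phi}_x$ invertible. I would read off the diagonal blocks of \eqref{eqn:Z_AB}: since $Z\hat A$ and $Z\hat B$ carry zero diagonal blocks (again by the downshift), the diagonal blocks of $(I - Z\hat A)\mathbf{\Phi}_x - Z\hat B\mathbf{\Phi}_u$ coincide with those of $\mathbf{\Phi}_x$, and equating to $I$ forces $\mathbf{\Phi}_x$ to be block-lower-triangular with identity diagonal blocks, hence invertible with block-lower-triangular inverse. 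This legitimizes setting $\mathbf{K} = \mathbf{\Phi}_u\mathbf{\Phi}_x^{-1}$, which is itself block-lower-triangular (causal) as a product of block-lower-triangular matrices. Substituting $\mathbf{\Phi}_u = \mathbf{K}\mathbf{\Phi}_x$ back into \eqref{eqn:Z_AB} and factoring gives $[I - Z(\hat A + \hat B\mathbf{K})]\mathbf{\Phi}_x = I$, so $\mathbf{\Phi}_x = (I - Z(\hat A + \hat B\mathbf{K}))^{-1}$ and $\mathbf{\Phi}_u = \mathbf{K}\mathbf{\Phi}_x$ reproduce exactly \eqref{eqn:Phis}, confirming that this $\mathbf{K}$ realizes the prescribed response.

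I expect the only genuine subtlety to be the invertibility of $\mathbf{\Phi}_x$: the rest is direct algebraic substitution, but the well-posedness of $\mathbf{K} = \mathbf{\Phi}_u\mathbf{\Phi}_x^{-1}$ rests entirely on the downshift structure of $Z$ forcing identity diagonal blocks. I would therefore be careful to emphasize that it is the \emph{strict} lower-triangularity of $Z$, not mere lower-triangularity, that simultaneously guarantees both the invertibility needed to define $\mathbf{K}$ and the causality (block-lower-triangularity) of the realized controller.
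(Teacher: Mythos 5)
Your proof is correct and complete: the forward substitution, the observation that the strict block-lower-triangularity of $Z(\hat A+\hat B\mathbf{K})$ forces identity diagonal blocks (hence invertibility) of $\mathbf{\Phi}_x$, and the back-substitution establishing that $\mathbf{K}=\mathbf{\Phi}_u\mathbf{\Phi}_x^{-1}$ realizes the given pair are exactly the ingredients of the standard argument. The paper itself states this result without proof, importing it as Theorem 2.1 of \citet{anderson_system_2019}; your argument matches the proof given in that reference.
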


Theorem \ref{thm:SLS} allows one to reformulate an optimal control problem over state and input variables $(\mathbf x, \mathbf u)$ as an equivalent one over system responses $(\mathbf{\Phi}_x,\mathbf{\Phi}_u)$. A detailed description of how to do this for several standard control problems is provided in §2 of \citet{anderson_system_2019}. 

\subsection{Locality Constraints in System Level Synthesis}

One of the advantages of the SLS framework is that it allows one to take into account the structure of a LTI system \eqref{eqn:dynamics} composed of subsystems \eqref{eqn:dynamics_i} \citep{wang_separable_2018}. In particular, the information constraints defined in equation \eqref{eqn:comms} can be enforced via \emph{locality constraints} on the system responses \eqref{eqn:Phis}. To see this, note that according to Theorem \ref{thm:SLS}, the controller $\mathbf K=\mathbf \Phi_u\mathbf\Phi_x^{-1}$ achieves the desired closed-loop behavior characterized by $\mathbf\Phi_x$ and $\mathbf\Phi_u$. Such a controller can be implemented as:
\begin{equation}\label{eqn:implementation}
   \mathbf{u}=\mathbf{\Phi}_u\mathbf{\hat w},\quad \mathbf{x}=(\mathbf{\Phi}_x-I)\mathbf{\hat w},\quad
   \mathbf{\hat w}=\mathbf{ x}-\mathbf{\hat x},
\end{equation}
where $\mathbf{\hat x}$ can be interpreted as a nominal state trajectory, and $\mathbf{\hat w} = Z\mathbf w$ is a delayed reconstruction of
the disturbance. Given this controller implementation, any structure imposed on the maps $(\mathbf\Phi_x,\mathbf\Phi_u)$ is directly translated to structure on controller implementation \eqref{eqn:implementation}. Hence, one can transparently impose information exchange constraints as suitable sparsity structure on the responses $(\mathbf\Phi_x,\mathbf\Phi_u)$. We formalize this idea as follows.

\begin{definition}\label{def:locality}
Let $[\mathbf{\Phi}_{x}]_{ij}$ be the submatrix of system response $\mathbf{\Phi}_x$ describing the map from disturbance $[w]_{j}$ to the state $[x]_i$ of subsystem $i$. The map $\mathbf{\Phi}_{x}$ is \textit{d-localized} if and only if for every subsystem $j$, $[\mathbf{\Phi}_{x}]_{ij}=0\ \forall\ i\not\in\mathbf{out}_{j}(d)$. The definition for \textit{d-localized} $\mathbf{\Phi}_u$ is analogous but with disturbance to control action $[u]_i$.
\end{definition}

By simply enforcing the system responses $\mathbf {\Phi}_x$ and $\mathbf {\Phi}_u$ to have a $d$-localized and $(d+1)$-localized sparsity pattern,\footnote{We impose $\mathbf{\Phi}_{u}$ to be $(d+1)$-localized because the ``boundary" controllers at distance $d+1$ must take action in order to localize the effects of a disturbance within the region of size $d$,  (for more details the reader is referred to \citet{anderson_system_2019}).} only a local subset $[{\mathbf{\hat w}}]_{\mathbf{in}_i(d)}$ of ${\mathbf{\hat w}}$ is necessary for subsystem $i$ to construct for $[\mathbf x]_i$ and $[\mathbf u]_i$. Therefore, the information exchange needed between subsystems during the controller synthesis is limited to $d$-hop incoming and outgoing neighbors, as defined by constraint  \eqref{eqn:comms}. Moreover, this constraint can be imposed on the system responses as an affine subspace.
\begin{definition} 
A subspace $\mathcal L_d$ enforces a $d$-locality constraint if $\mathbf \Phi_x,\mathbf\Phi_u\in\mathcal L_d$ implies that $\mathbf \Phi_x$ is $d$-localized and $\mathbf \Phi_u$ is $(d + 1)$-localized. A system $(A, B)$ is then $d$-localizable if the intersection of $\mathcal L_d$ with the affine space of achievable system responses \eqref{eqn:Z_AB} is non-empty.
\end{definition}

As introduced in \citet{amoalonso_distributed_2020}, when local model information i.e. $[A]_{ij},\ [B]_i$ is available, locality constraints allow the MPC subroutine \eqref{eqn:MPC} to be reformulated into a tractable problem that admits a distributed solution. Because the setting considered in this paper has no driving noise, only the first block columns of $\mathbf \Phi_x$ and $\mathbf \Phi_u$ are necessary to characterize system behavior. 
For this reason, in what follows we abuse notation and use $\mathbf \Phi_x$ and $\mathbf \Phi_u$ to denote only the first block column of the original block-lower triangular matrices. Therefore, the change of variables defined in equation \eqref{eqn:Phis} reduces to $\mathbf x = \mathbf \Phi_x x_0$ and $\mathbf u = \mathbf \Phi_u x_0$.

Hence, in a \emph{model-based} setting the MPC subroutine \eqref{eqn:MPC} can be reformulated as
\begin{equation} \label{eqn:DLMPC}
\begin{aligned}
& \underset{\mathbf \Phi}{\text{minimize}}
& &f(\mathbf \Phi x_0)\\
& \ \text{s.t.} &  &\begin{aligned} 
     & x_0 = x(t),\ \mathbf \Phi x_{0}\in\mathcal{P}, \\
     & Z_{AB}\mathbf \Phi=I, \ \mathbf{\Phi}\in\mathcal L_d,
\end{aligned}
\end{aligned}
\end{equation}
where $f$ is a convex function compatible with all the $f_t$ in formulation \eqref{eqn:MPC}, and $\mathcal P$ is defined so that $\mathbf \Phi x_{0}\in\mathcal{P}$ if and only if $x(t)\in\mathcal X_t$ and $u(t)\in\mathcal U_t$ for all $t=0,\dots,T$. For additional details on this formulation and its distributed synthesis and implementation readers are referred to \citet{amoalonso_implementation_2021} and references therein.

 \begin{remark} Although $\mathcal L_d$ is always a convex subspace, not all systems are $d$-localizable. For systems that are not $d$-localizable, constraint $\mathbf{\Phi}\in\mathcal L_d$ would lead to an infeasible subroutine \eqref{eqn:DLMPC}. The locality diameter $d$ can be viewed as a design parameter that is application dependent. For the remainder of the paper, we assume that there exists a $d\ll n$ such that the system $(A,B)$ to be controlled is $d$-localizable. Notice that the parameter $d$ is tuned independently of the horizon $T$, and captures how ``far'' in the interconnection topology a disturbance striking a subsystem is allowed to spread.
 \end{remark}

\subsection{Data-driven System Level Synthesis}
This subsection is adapted from \S 2 of \citet{xue_data_2021}. 

Behavioral system theory \citep{willems_introduction_1997,persis_formulas_2019,markovsky_behavioral_2021,dorfler_certainty_2021} offers a natural way of studying the behavior of a dynamical system in terms of its input/output signals. In particular, Willem's Fundamental Lemma \citep{willems_introduction_1997} offers a parametrization of state and input trajectories based on past trajectories as long as the data matrix satisfies a notion of persistance of excitation.
\begin{definition}
A finite-horizon signal $\mathbf x$ with horizon $T$ is \emph{persistently exciting} (PE) of order $L$ if the Hankel matrix 
$$H_L(\mathbf x):= { {\scriptscriptstyle{\left[\begin{array}{cccc}x(0) & x(1) & \dots & x(T-L) \\ x(1) & x(2) & \dots & x(T-L+1)\\ \vdots & \vdots & \ddots & \vdots \\ x(L-1) & x(L) & \dots & x(T-1) \end{array}\right]}}}$$ 
has full rank.
\end{definition}

\begin{lemma}\label{lemm:Willem}
(Willem's Fundamental Lemma \citep{willems_introduction_1997}) Consider the LTI system \eqref{eqn:dynamics} with controllable $(A, B)$ matrices, and assume that there is no driving noise. Let $\{\tilde{\mathbf x},\tilde{\mathbf u}\}$ be the state and input signals generated by the system over a horizon $T$. If $\tilde{\mathbf u}$ is PE of order $n+L$, then the signals $\mathbf x$ and $\mathbf u$ are valid trajectories of length $L$ of the system \eqref{eqn:dynamics} if and only if
\begin{equation}\label{eqn:behavioral}
\begin{bmatrix}
     \mathbf x \\ \mathbf{u}
\end{bmatrix}
=
 H_L(\tilde{\mathbf{x}},\tilde{\mathbf u}) \
 g ~~\text{for some } g\in\mathbb{R}^{T-L+1},
\end{equation}
where $H_L(\tilde{\mathbf{x}},\tilde{\mathbf u}) := \begin{bmatrix}
 H_L(\tilde{\mathbf{x}})^\intercal & H_L(\tilde{\mathbf{u}})^\intercal
 \end{bmatrix}^\intercal $.
\end{lemma}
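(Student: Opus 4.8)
The plan is to prove the two implications separately, treating the ``if'' direction (sufficiency of the column-span parametrization) as routine and locating all the difficulty in the ``only if'' direction (that the parametrization is complete).

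\emph{The easy direction.} Suppose $[\mathbf{x}^\top\ \mathbf{u}^\top]^\top = H_L(\tilde{\mathbf{x}},\tilde{\mathbf{u}})\,g$ for some $g$. Each column of $H_L(\tilde{\mathbf{x}},\tilde{\mathbf{u}})$ is a length-$L$ window $\{\tilde x(k),\dots,\tilde x(k+L-1)\}$, $\{\tilde u(k),\dots,\tilde u(k+L-1)\}$ of the recorded signals; since $\{\tilde{\mathbf{x}},\tilde{\mathbf{u}}\}$ satisfies \eqref{eqn:dynamics} and the system is time-invariant, every such window is itself a valid length-$L$ trajectory. Because the dynamics \eqref{eqn:dynamics} are linear with no driving noise, the set of valid length-$L$ trajectories is a linear subspace of $\mathbb{R}^{(n+p)L}$, hence closed under the linear combination encoded by $g$. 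Therefore $H_L(\tilde{\mathbf{x}},\tilde{\mathbf{u}})\,g$ is a valid trajectory.

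\emph{The hard direction --- reduction to a rank count.} I would first pin down the dimension of the trajectory subspace: a length-$L$ trajectory is uniquely determined by the free data $(x(0),u(0),\dots,u(L-1))\in\mathbb{R}^{n+pL}$, with the remaining states fixed by \eqref{eqn:dynamics}, so the subspace of valid trajectories has dimension exactly $n+pL$. By the easy direction the column span of $H_L(\tilde{\mathbf{x}},\tilde{\mathbf{u}})$ lies inside this subspace, so it suffices to show the span fills it, i.e.
\begin{equation*}
\operatorname{rank} H_L(\tilde{\mathbf{x}},\tilde{\mathbf{u}}) = \operatorname{rank}\begin{bmatrix} H_L(\tilde{\mathbf{u}}) \\ H_L(\tilde{\mathbf{x}}) \end{bmatrix} = n+pL .
\end{equation*}
The bound $\operatorname{rank}\le n+pL$ is the containment just established; the content is the reverse inequality, which is where controllability and persistency of excitation enter.

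\emph{The rank lower bound (main obstacle).} To show the stacked Hankel has full row rank $n+pL$ I would argue by contradiction on its left null space: suppose $a^\top H_L(\tilde{\mathbf{u}}) + b^\top H_L(\tilde{\mathbf{x}}) = 0$ for some $(a,b)$ not both zero, whose blocks weight the $L$ time-indices within each window. Using $x(t+k)=A^k x(t)+\sum_{j=0}^{k-1}A^{k-1-j}Bu(t+j)$ to rewrite every state row of $H_L(\tilde{\mathbf{x}})$ in terms of input rows and an ``initial-state'' row, and using controllability of $(A,B)$ to bound the reach-back of this dependence to $n$ steps, the relation collapses to a linear dependence among the rows of $H_{n+L}(\tilde{\mathbf{u}})$ together with a term carried by the initial states. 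Persistency of excitation of order $n+L$ forces the input coefficients to vanish, and controllability then rules out any residual dependence through the initial states, yielding $a=0$, $b=0$ and the desired contradiction. I expect this bookkeeping --- converting the windowed state/input relation into a statement about $H_{n+L}(\tilde{\mathbf{u}})$ and simultaneously invoking both hypotheses --- to be the crux, the two directions above being comparatively mechanical. Combining full rank with the dimension count shows the column span equals the entire trajectory subspace, which establishes the ``only if'' direction and completes the proof.
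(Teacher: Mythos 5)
The paper offers no proof of this lemma --- it is quoted with a citation to Willems et al., so there is no in-paper argument to compare against; your proposal must be judged against the standard literature proof, whose route (identify the valid length-$L$ trajectories as an $(n+pL)$-dimensional subspace, observe the Hankel columns lie in it, then match ranks) you correctly follow. The easy direction and the dimension count are fine.

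There is, however, a genuine flaw in how you set up the rank lower bound. You argue by contradiction that any left null vector $(a,b)$ of $[\,H_L(\tilde{\mathbf u})^\intercal \ \ H_L(\tilde{\mathbf x})^\intercal\,]^\intercal$ must vanish, i.e.\ that the stacked Hankel has full row rank $(n+p)L$. That is false whenever $L\ge 2$: by your own upper bound the rank is at most $n+pL$, so the left null space has dimension at least $n(L-1)$ and is never trivial. Concretely, for any $v\in\mathbb R^n$ the functional $v^\intercal\bigl(x(k{+}1)-Ax(k)-Bu(k)\bigr)$ vanishes on every column (these are exactly the dynamics constraints), giving a nonzero left null vector with $b_0=-A^\intercal v$, $b_1=v$, $a_0=-B^\intercal v$ and all other blocks zero. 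Hence the contradiction you aim for does not exist, and the chain ``PE kills the input coefficients, controllability kills the rest, hence $a=b=0$'' cannot close. The repair is standard but changes what you must prove: after rewriting each window's states via $x(t+k)=A^kx(t)+\sum_{j} A^{k-1-j}Bu(t+j)$, the relation collapses to $c^\intercal\tilde x(k)+\sum_j d_j^\intercal\tilde u(k+j)=0$ for induced coefficients $(c,d)$, and what PE of order $n+L$ together with controllability actually deliver (the key rank lemma of Willems et al.) is $c=0$ and $d=0$, i.e.\ full row rank $n+pL$ of the reduced matrix stacking $H_L(\tilde{\mathbf u})$ with only the \emph{first} state of each window. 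From $c=d=0$ the correct conclusion is not $(a,b)=0$ but that $(a,b)$ annihilates every valid length-$L$ trajectory; therefore the left null space equals the $n(L-1)$-dimensional annihilator of the trajectory subspace, the rank is exactly $n+pL$, and the column span fills the subspace. With that last inference corrected, your outline becomes the classical proof.
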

\vspace{2mm}

A natural connection can be established between the data-driven parametrization \eqref{eqn:behavioral} and the SLS parametrization \eqref{eqn:Phis}. In particular, the achievability constraint \eqref{eqn:Z_AB} can be replaced by a data-driven representation by applying Willems' Fundamental Lemma \citep{willems_introduction_1997} to the columns of the system responses. Given a system response, we denote the set of columns corresponding to subsystem $i$ as $\mathbf \Phi^{i}$, i.e., 
\[
\mathbf \Phi = \begin{bmatrix} \mathbf \Phi^{1} & \mathbf \Phi^{2} 
& ... & \mathbf \Phi^{N}\end{bmatrix}.
\]
The key insight is that, by definition of the system responses \eqref{eqn:Phis}, $\mathbf \Phi_x^i$ and $\mathbf \Phi_u^i$ are the impulse response of $\mathbf x$ and $\mathbf u$ to $[x_0]_i$, which are themselves, valid system trajectories that can be characterized using Willems’ Fundamental Lemma. This can be seen from the following decomposition of the trajectories
$$
\begin{bmatrix} \mathbf x \\ \mathbf u \end{bmatrix} = \mathbf \Phi [x_0]_i
= \sum_{i=1}^N \mathbf \Phi^{i} [x_0]_i.
$$

\begin{lemma}\label{lemm:Anton}
(Lemma 1 of \citet{xue_data_2021}) Given the assumptions of Lemma \ref{lemm:Willem}, the set of feasible solutions to constraint \eqref{eqn:Z_AB} over a time horizon $t = 0, 1, . . . , L-1$ can be equivalently characterized as:
\begin{equation}\label{eqn:behavioral_sls}
H_L(\tilde{\mathbf{x}},\tilde{\mathbf{u}}) \mathbf{G},\text{  for all $\mathbf G$ s.t. $H_1(\tilde{\mathbf{x}})\mathbf G = I$.}
\end{equation}
\end{lemma}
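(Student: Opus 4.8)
The plan is to show the two descriptions define the same set by routing through the intermediate notion of ``valid impulse-response trajectory'' and invoking Willems' Lemma one column at a time. First I would recall, via Theorem \ref{thm:SLS}, that a block-lower-triangular $\mathbf{\Phi} = [\mathbf{\Phi}_x^\mathsf{T}\ \mathbf{\Phi}_u^\mathsf{T}]^\mathsf{T}$ satisfies $Z_{AB}\mathbf{\Phi} = I$ if and only if it parametrizes an achievable closed-loop map $\mathbf{w}\mapsto(\mathbf{x},\mathbf{u})$. In the noise-free setting we keep only the first block column, so the $k$-th column of $\mathbf{\Phi}$ is exactly the stacked state–input trajectory $[\mathbf{x}^\mathsf{T}\ \mathbf{u}^\mathsf{T}]^\mathsf{T}$ produced by the initial condition $x_0 = e_k$. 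The achievability constraint therefore says precisely that each column of $\mathbf{\Phi}$ is a valid length-$L$ trajectory of the system whose initial state is the corresponding column $e_k$ of the identity.

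For the forward inclusion I would take any $\mathbf{\Phi}$ with $Z_{AB}\mathbf{\Phi} = I$ and apply the ``only if'' direction of Lemma \ref{lemm:Willem} to each column: since every column is a valid trajectory, it lies in the range of $H_L(\tilde{\mathbf{x}},\tilde{\mathbf{u}})$, so there is a $g_k$ with column $k$ equal to $H_L(\tilde{\mathbf{x}},\tilde{\mathbf{u}})g_k$. Collecting $\mathbf{G} = [g_1\ \cdots\ g_n]$ yields $\mathbf{\Phi} = H_L(\tilde{\mathbf{x}},\tilde{\mathbf{u}})\mathbf{G}$. It then remains to read off the initial-condition constraint: restricting the first block row of $H_L(\tilde{\mathbf{x}},\tilde{\mathbf{u}})$ to its state part gives $H_1(\tilde{\mathbf{x}})$, so $H_1(\tilde{\mathbf{x}})\mathbf{G}$ extracts the initial states of the parametrized trajectories, which by the previous paragraph are the columns of $I$; hence $H_1(\tilde{\mathbf{x}})\mathbf{G} = I$.

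For the reverse inclusion I would run the same chain backwards using the ``if'' direction of Lemma \ref{lemm:Willem}: starting from any $\mathbf{G}$ with $H_1(\tilde{\mathbf{x}})\mathbf{G} = I$ and setting $\mathbf{\Phi} = H_L(\tilde{\mathbf{x}},\tilde{\mathbf{u}})\mathbf{G}$, the lemma guarantees each column is a valid trajectory, and the constraint $H_1(\tilde{\mathbf{x}})\mathbf{G} = I$ forces its initial state to be the matching $e_k$. Therefore $\mathbf{\Phi}$ is exactly the collection of unit-initial-condition impulse responses and so, by Theorem \ref{thm:SLS}, satisfies $Z_{AB}\mathbf{\Phi} = I$. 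Combining the two inclusions gives the claimed equivalence.

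The main obstacle I anticipate is bookkeeping rather than conceptual. One must (i) match the SLS prediction horizon with the trajectory length $L$ in Lemma \ref{lemm:Willem} so that the two Hankel matrices act on a common parameter space, and (ii) interpret $H_1(\tilde{\mathbf{x}})$ correctly as the first block row of $H_L(\tilde{\mathbf{x}})$, i.e.\ the depth-one state Hankel truncated to the $T-L+1$ columns of $\mathbf{G}$, so that $H_1(\tilde{\mathbf{x}})\mathbf{G} = I$ genuinely encodes ``the initial state of each parametrized trajectory is $e_k$.'' Once these identifications are pinned down, the result is just the column-wise application of the biconditional in Lemma \ref{lemm:Willem} together with the initial-condition normalization; the persistence-of-excitation and controllability hypotheses are inherited verbatim and are what make Willems' characterization an exact equivalence rather than a one-sided inclusion.
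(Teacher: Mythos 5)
Your proposal is correct and follows exactly the route the paper intends (and that the cited source, Lemma 1 of \citet{xue_data_2021}, uses): the columns of $\mathbf{\Phi}$ are valid length-$L$ state--input trajectories with initial conditions equal to the columns of the identity, so Willems' Fundamental Lemma applied column-wise gives the range-of-Hankel characterization, with $H_1(\tilde{\mathbf{x}})\mathbf{G}=I$ encoding the initial-condition normalization. Your bookkeeping remarks --- matching the horizon $L$ and reading $H_1(\tilde{\mathbf{x}})$ as the first block row of $H_L(\tilde{\mathbf{x}})$ truncated to $T-L+1$ columns --- are exactly the identifications needed, so no gap remains.
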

\vspace{2mm}


The following Corollary follows naturally from Lemma \ref{lemm:Anton}, and will be useful later to provide locality results in this data-driven parametrization. 
\begin{corollary} \label{cor:Anton}
The following is true:
\begin{equation*}
\{\mathbf{\Phi}^{i}:\ Z_{AB}\mathbf\Phi^i = I^i\} = 
\{ H_L(\tilde{\mathbf{x}},\tilde{\mathbf{u}}) \mathbf G^i: \mathbf G^i \text{ s.t. } H_1(\tilde{\mathbf x})\mathbf G^i = I^{i}\},
\end{equation*}
where $I^{i}$ denotes the $i$-th block column of the identity matrix.
\end{corollary}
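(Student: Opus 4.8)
The plan is to obtain this as a block-column decoupling of Lemma \ref{lemm:Anton}. The key observation is that every operation appearing on both sides acts independently on block columns: multiplication by $Z_{AB}$ or by $H_L(\tilde{\mathbf x},\tilde{\mathbf u})$ sends the $i$-th block column of its argument to the $i$-th block column of the product, and the identity decomposes as the concatenation $I = [I^1\ \cdots\ I^N]$. Consequently the full-matrix achievability constraint $Z_{AB}\mathbf\Phi = I$ is equivalent to the collection of per-block-column constraints $Z_{AB}\mathbf\Phi^i = I^i$ with no coupling between distinct columns, and likewise $H_1(\tilde{\mathbf x})\mathbf G = I$ is equivalent to $H_1(\tilde{\mathbf x})\mathbf G^i = I^i$ for all $i$. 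This is exactly the structure that lets the columnwise statement drop out of the full one.

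I would then argue the two inclusions directly via a completion argument. For $\subseteq$, given $\mathbf\Phi^i$ with $Z_{AB}\mathbf\Phi^i = I^i$, I complete it to a full $\mathbf\Phi = [\mathbf\Phi^1\ \cdots\ \mathbf\Phi^N]$ by choosing, for each $j\neq i$, some $\mathbf\Phi^j$ solving $Z_{AB}\mathbf\Phi^j = I^j$; such solutions exist because the achievability constraint \eqref{eqn:Z_AB} is always feasible for the controllable pair $(A,B)$. Then $Z_{AB}\mathbf\Phi = I$, so Lemma \ref{lemm:Anton} supplies a $\mathbf G$ with $\mathbf\Phi = H_L(\tilde{\mathbf x},\tilde{\mathbf u})\mathbf G$ and $H_1(\tilde{\mathbf x})\mathbf G = I$; reading off block column $i$ yields $\mathbf\Phi^i = H_L(\tilde{\mathbf x},\tilde{\mathbf u})\mathbf G^i$ with $H_1(\tilde{\mathbf x})\mathbf G^i = I^i$. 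The reverse inclusion $\supseteq$ is symmetric: I complete $\mathbf G^i$ to a full $\mathbf G$ with $H_1(\tilde{\mathbf x})\mathbf G = I$, apply Lemma \ref{lemm:Anton}, and restrict to block column $i$. Equivalently, and perhaps more transparently, one can write both sets of Lemma \ref{lemm:Anton} as Cartesian products over the block-column index and invoke the fact that equal products of nonempty sets have equal factors; matching the $i$-th factors gives the claim at once.

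The only genuine subtlety, and the step I would be most careful about, is justifying the completions, i.e.\ ensuring both per-block-column solution sets are nonempty so that the missing columns can always be filled in (equivalently, that the product factorization has no empty factor). This reduces to the standard facts that \eqref{eqn:Z_AB} is feasible under controllability and that $H_1(\tilde{\mathbf x})$ has full row rank under the persistence-of-excitation hypothesis, both of which are already in force through the assumptions inherited from Lemma \ref{lemm:Willem}. Everything else is bookkeeping on block-column indexing.
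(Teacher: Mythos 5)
Your proposal is correct and follows exactly the paper's route: the paper's own proof is the one-line observation that the constraints in Lemma~\ref{lemm:Anton} separate column-wise, and your argument is simply a careful expansion of that observation, with the completion/nonemptiness step (needed to pass from equality of the full solution sets to equality of the per-block-column factors) made explicit where the paper leaves it implicit.
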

\begin{proof}
This follows directly from Lemma \ref{lemm:Anton} by noting that the constraints can be separated column-wise.
\end{proof}

While this connection between SLS and the behavioral formulation does not offer an immediate benefit, we will build on it in the following sections to equip the data-driven parametrization \eqref{eqn:behavioral_sls} with locality constraints so as to provide a reformulation of the localized MPC subroutine \eqref{eqn:MPC}.

\section{Localized Data-Driven System Level Synthesis}\label{dd-l-SLS}
In this section we present the necessary results that allow us to recast the constraints in \eqref{eqn:DLMPC} in a localized data-driven parametrization. We first provide a naive parametrization of system responses subject to locality constraints based on Lemma \ref{lemm:Anton} in terms of $\mathbf G$. We then build on this parameterization and show that localized system responses can be characterized using only locally collected trajectories.

\subsection{Locality Constraints in Data-Driven System Level Synthesis}
We start by rewriting the locality constraints using the data-driven parameterization \eqref{eqn:behavioral_sls}.

\begin{lemma}\label{lemm:localized_G}
Consider the LTI system \eqref{eqn:dynamics} with controllable $(A,B)$ matrices, where each subsystem \eqref{eqn:dynamics_i} is subject to locality constraints \eqref{eqn:comms}. Assume that there is no driving noise. Given the state and input trajectories $\{\tilde{\mathbf x},\tilde{\mathbf u}\}$ generated by the system over a horizon $T$ with $\mathbf u$ PE of order at least $n+L$, the following parametrization over $\mathbf G$ characterizes all possible $d$-localized system responses over a time span of $L - 1$:
\begin{align}\label{eqn:localized_G}
 H_L(\tilde{\mathbf{x}},\tilde{\mathbf{u}})
\mathbf G,~ & \text{for all}~ \mathbf G~\text{s.t.}\ H_1(\tilde{\mathbf x})\mathbf G=I, \\
  & H_L([\tilde{\mathbf{x}}]_i)\mathbf G^j=0\ \forall j\notin\textbf{in}_i(d),\ \nonumber \\
  & H_L([\tilde{\mathbf{u}}]_i)\mathbf G^k=0\ \forall k\notin\textbf{in}_i(d+1), \nonumber
 \end{align}
 for all $i= 1,\dots,N.$
\end{lemma}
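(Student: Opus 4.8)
The plan is to show that the two locality conditions imposed on $\mathbf G$ in \eqref{eqn:localized_G} are equivalent to requiring that the resulting responses $\mathbf \Phi_x = H_L(\tilde{\mathbf x})\mathbf G$ and $\mathbf \Phi_u = H_L(\tilde{\mathbf u})\mathbf G$ be $d$-localized and $(d+1)$-localized respectively, in the sense of Definition \ref{def:locality}. I would start from Corollary \ref{cor:Anton}, which already establishes that the columns $\mathbf \Phi^i = H_L(\tilde{\mathbf x},\tilde{\mathbf u})\mathbf G^i$ with $H_1(\tilde{\mathbf x})\mathbf G^i = I^i$ parameterize exactly the feasible solutions to the achievability constraint $Z_{AB}\mathbf \Phi = I$. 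So the first line of \eqref{eqn:localized_G} simply reproduces the achievability parametrization, and the entire content of the lemma is that the two Hankel-product conditions are the correct translation of the sparsity pattern in Definition \ref{def:locality}.

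The key step is to relate the block rows of $\mathbf \Phi_x$ and $\mathbf \Phi_u$ to the Hankel matrices of the \emph{local} signals. First I would observe that extracting the rows of $\mathbf \Phi_x$ corresponding to subsystem $i$ is exactly the operation $[\mathbf \Phi_x]_i = H_L([\tilde{\mathbf x}]_i)\mathbf G$, because $[\tilde{\mathbf x}]_i$ is the local sub-signal of $\tilde{\mathbf x}$ and the Hankel map commutes with the row-selection that picks out subsystem $i$ (the Hankel matrix of a sub-signal is the sub-block of rows of the full Hankel matrix). Then, by Definition \ref{def:locality}, $\mathbf \Phi_x$ is $d$-localized if and only if for every pair $(i,j)$ the block $[\mathbf \Phi_x]_{ij} = 0$ whenever $i \notin \mathbf{out}_j(d)$. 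Using the duality $i \notin \mathbf{out}_j(d) \iff j \notin \mathbf{in}_i(d)$ — which follows directly from Definition \ref{def: in_out_set}, since $\mathbf{dist}(v_j \to v_i) \le d$ characterizes both sets — this rewrites the localization condition as: the $j$-th block column of $[\mathbf \Phi_x]_i$ vanishes for all $j \notin \mathbf{in}_i(d)$. Since the $j$-th block column of $\mathbf \Phi$ is generated by $\mathbf G^j$, this is precisely $H_L([\tilde{\mathbf x}]_i)\mathbf G^j = 0$ for all $j \notin \mathbf{in}_i(d)$, which is the second line of \eqref{eqn:localized_G}. The identical argument applied to $\mathbf \Phi_u$, but with the $(d+1)$-localized pattern, yields the third line with $\mathbf{in}_i(d+1)$.

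I would then close the argument by collecting both directions: any $\mathbf G$ satisfying the three conditions produces, via Corollary \ref{cor:Anton}, a valid system response that is $d$-localized and $(d+1)$-localized by construction; conversely any $d$-localized response admits such a $\mathbf G$ because Corollary \ref{cor:Anton} guarantees a representation of each column, and the localization of the response forces the corresponding Hankel products to vanish as shown. The main obstacle I anticipate is purely bookkeeping rather than conceptual: carefully establishing the row/column index correspondence between the $\mathbf{out}_j(d)$ sparsity pattern of Definition \ref{def:locality} and the $\mathbf{in}_i(d)$ conditions on $\mathbf G$, and verifying cleanly that $H_L([\tilde{\mathbf x}]_i)\mathbf G$ indeed equals the $i$-th block row of $H_L(\tilde{\mathbf x})\mathbf G$ so that imposing the constraint column-by-column via $\mathbf G^j$ is legitimate. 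Once that index translation is pinned down, the equivalence is immediate.
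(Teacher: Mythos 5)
Your proposal is correct and follows essentially the same route as the paper: both directions of the set equality are obtained by invoking the data-driven achievability parametrization (Lemma \ref{lemm:Anton} / Corollary \ref{cor:Anton}) and then identifying $H_L([\tilde{\mathbf x}]_i)\mathbf G^j$ with the block $[\mathbf \Phi_x]_{ij}$, so that the sparsity pattern of $\mathcal L_d$ translates directly into the Hankel-product constraints. The only difference is that you spell out the $\mathbf{out}_j(d)$ versus $\mathbf{in}_i(d)$ index duality and the row-selection property of the Hankel map, which the paper's proof uses implicitly.
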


\begin{proof}
We aim to show that $$\{\mathbf{\Phi}:\ Z_{AB}\mathbf\Phi = I,\mathbf\Phi\in\mathcal L_d\} = \{  H_L(\tilde{\mathbf{x}},\tilde{\mathbf{u}})  \mathbf G: \mathbf G \text{ s.t. } \eqref{eqn:localized_G}\}.$$

$(\subseteq)$ First, suppose that $\mathbf{\Phi} \in \mathcal L_d$ satisfies that $Z_{AB}\mathbf{\Phi} = I$. From Lemma \ref{lemm:Anton}, we immediately have that there exists a matrix $\mathbf G$ s.t. $\mathbf \Phi =  H_L(\tilde{\mathbf{x}},\tilde{\mathbf{u}}) \mathbf G.$
Thus, we need only verify that this $\mathbf G$ satisfies the linear constraint in \eqref{eqn:localized_G}. This follows directly from the assumption that $\mathbf \Phi \in \mathcal L_d$, which states that
\begin{align*}
  & H_L([\tilde{\mathbf{x}}]_i)\mathbf G^j=[\mathbf \Phi_x]_{ij}=0\ \forall j\notin\textbf{in}_i(d),\ \nonumber \\
  & H_L([\tilde{\mathbf{u}}]_i)\mathbf G^k=[\mathbf\Phi_u]_{ik}=0\ \forall k\notin\textbf{in}_i(d+1).
\end{align*}
Hence, $\mathbf\Phi \in \text{RHS}$, proving this direction.

$(\supseteq)$ Now suppose that there exists a $\mathbf G$ that satisfies the constraints on the RHS and let $\mathbf \Phi = H_L(\tilde{\mathbf{x}},\tilde{\mathbf{u}}) \mathbf G.$ Since $H_1(\tilde{\mathbf x})\mathbf{G}=I$, from Lemma 2, we have that $\mathbf \Phi$ is achievable. From the other two constraints, we have that $\mathbf \Phi \in \mathcal L_d$, proving this direction and hence the lemma.
\end{proof}

It is important to note that even though Lemma \ref{lemm:localized_G} allows one to capture the locality constraint \eqref{eqn:comms} by simply translating the locality constraints over $\mathbf\Phi$ to constraints over $\mathbf G$, it cannot be implemented with only local information exchange. In order to satisfy the constraints \eqref{eqn:localized_G}, each subsystem has to have access to global state and input trajectories and construct a global Hankel matrix. The PE condition of Lemma \ref{lemm:Willem} further implies that the length of the trajectory that needs to be collected grows with the dimension of the global system state. In what follows we show how constraint \eqref{eqn:localized_G} can further be relaxed to only require local information without introducing any additional conservatism.

\subsection{Localized Data-driven System Level Synthesis}

In this subsection we show that constraint \eqref{eqn:localized_G} can be enforced (i) with local communication between neighbors i.e. no constraints are imposed outside each subsystem $d$-neighborhood, and (ii) the amount of data needed i.e., trajectory length, only scales with the size of the $d$-localized neighborhood, and not the global system. We start by providing a result that allows constraint \eqref{eqn:localized_G} to be satisfied with local information only.

\begin{definition}
Given a subsystem $i$ satisfying the local dynamics \eqref{eqn:dynamics_i}, we define its \emph{augmented $d$-localized subsystem} as the system composed by the states $[x]_{\textbf{in}_i(d+1)}$ and augmented control actions $[\bar{u}]_i := ([u]_{\textbf{in}_i(d+2)}^\intercal\;[x]_j^\intercal)^\intercal,\;\forall j\ s.t.\ \textbf{dist}(j\to i) = d+2$. That is, the system given by
\begin{equation}\label{eqn:agumented_subsytem}
[x(t+1)]_{\textbf{in}_i(d+1)}=[A]_{\textbf{in}_i(d+1)}[x(t)]_{\textbf{in}_i(d+1)}
+[\bar{B}]_{\textbf{in}_i(d+1)}[\bar{u}(t)]_i,
\end{equation}
with $\bar{B}:=\begin{bmatrix}
[B]_{\textbf{in}_i(d+2)} & [A]_{ij} 
\end{bmatrix}$ $\forall\ j$ s.t. $dist(j\rightarrow i) = d+2$.
\end{definition}

Notice that by treating the state of the boundary subsystems as additional control inputs, we can view the augmented $d$-localized system as a standalone LTI system.

\begin{lemma}\label{lemm:boundary}
For $i=1,\dots,N$, let $\mathbf\Psi^i$ be an achievable system response for the augmented $d$-localized subsystem \eqref{eqn:agumented_subsytem} of subsystem $i$. Further assume that each $\mathbf\Psi^i$ satisfies constraints \eqref{eqn:augmented_constraints}:
\begin{subequations}\label{eqn:augmented_constraints}
\begin{equation}
    [\mathbf \Psi^{i}_x]_{j} = 0,\ \forall j\text{ s.t. } d+1\leq \textbf{dist}(j\rightarrow i)\leq d+2,
\end{equation}
\begin{equation}
    [\mathbf \Psi^{i}_u]_{j} = 0,\ \forall j\text{ s.t. } \textbf{dist}(j\rightarrow i)= d+2
\end{equation}
\end{subequations}
for all $i$. Then, the system response $\mathbf \Phi$ defined by \eqref{eqn:aumgmented_Phi} is achievable for system \eqref{eqn:dynamics} and $d$-localized. 
\begin{equation}\label{eqn:aumgmented_Phi}
[\mathbf \Phi]_{ij} := \begin{cases}
    [\mathbf \Psi^{i}]_{j}, & \forall j\in\textbf{in}_i(d+1) \\
    0, & \text{otherwise}
    \end{cases}
\end{equation}
for all $i=1,\dots,N$ is also achievable and $d$-localized.
\end{lemma}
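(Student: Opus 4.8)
The plan is to verify the two asserted properties of $\mathbf\Phi$ in turn: that it is $d$-localized in the sense of Definition \ref{def:locality}, and that it is achievable for the global system \eqref{eqn:dynamics}, i.e. that it satisfies $Z_{AB}\mathbf\Phi = I$ of \eqref{eqn:Z_AB}. I would dispose of localization first, since it is essentially bookkeeping, and then spend the bulk of the effort on achievability, which is where the construction \eqref{eqn:aumgmented_Phi} and the boundary constraints \eqref{eqn:augmented_constraints} actually earn their keep.

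For localization, the claim follows directly from the construction together with \eqref{eqn:augmented_constraints}. By \eqref{eqn:aumgmented_Phi} every block outside the region $\textbf{in}_i(d+1)$ is already zero, so a priori both $\mathbf\Phi_x$ and $\mathbf\Phi_u$ are supported within radius $d+1$. The first constraint in \eqref{eqn:augmented_constraints} then zeros the state blocks at distance $d+1$ (and $d+2$), trimming the support of $\mathbf\Phi_x$ down to $\textbf{in}_i(d)$, which is exactly $d$-localization; the second constraint leaves $\mathbf\Phi_u$ supported within $\textbf{in}_i(d+1)$, i.e. $(d+1)$-localized. This is precisely the sparsity pattern required of a $d$-localized response, so no dynamics are needed for this part.

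For achievability I would exploit the column-wise separability recorded in Corollary \ref{cor:Anton}. The crucial structural point is that the entire family of blocks assembled from a \emph{single} augmented response $\mathbf\Psi^i$ constitutes exactly the part of $\mathbf\Phi$ describing the localized response around subsystem $i$; consequently I can check $Z_{AB}\mathbf\Phi = I$ one augmented subsystem at a time, and the separately (and independently) chosen responses $\mathbf\Psi^i$ never have to be reconciled with one another. Reading \eqref{eqn:Z_AB} in the time domain as the one-step impulse-response recursion induced by $(A,B)$, the recursion evaluated at any \emph{interior} output subsystem $k$ (well inside the region, so that all of $\mathcal N_k$ lies in $\textbf{in}_i(d+1)$) involves only blocks that lie inside the region, where the augmented dynamics matrices $[A]_{\textbf{in}_i(d+1)}$ and $[\bar B]_{\textbf{in}_i(d+1)}$ of \eqref{eqn:agumented_subsytem} agree block-for-block with $(A,B)$. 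There the global relation is term-for-term identical to the augmented achievability relation satisfied by $\mathbf\Psi^i$ by hypothesis, and hence holds automatically.

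The main obstacle, and the only genuinely delicate step, is the bookkeeping at the boundary of the region, where the specific form of $\bar B$ and the two constraints in \eqref{eqn:augmented_constraints} must be matched against one another. Two checks are needed. First, at outermost retained subsystems (distance $d+1$) the global recursion picks up couplings $[A]_{kj}$ to subsystems at distance $d+2$; these are precisely the couplings that $\bar B$ absorbs by treating the distance-$(d+2)$ states as auxiliary inputs, and the second constraint in \eqref{eqn:augmented_constraints} forces the corresponding input-response blocks to vanish, so that in the embedded solution those boundary states contribute nothing and the recursion closes consistently. Second, at output subsystems just \emph{outside} the region (distance $d+2$) the embedded response is zero, so the recursion can only fail if a nonzero block leaks in through a neighbor at distance $d+1$; the first constraint in \eqref{eqn:augmented_constraints} kills exactly those state blocks, so the recursion reduces to $0=0$ and no response escapes the region. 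Matching the absorbed $[A]_{ij}$ couplings in $\bar B$ against the zeroed boundary blocks of \eqref{eqn:augmented_constraints} is what shows the zero-padded embedding still satisfies \eqref{eqn:Z_AB} exactly, completing the argument; the interior step and the localization claim are comparatively immediate.
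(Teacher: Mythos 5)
Your proposal is correct and follows essentially the same route as the paper: both verify achievability block-column-wise by partitioning the rows of $\mathbf\Phi^i$ according to distance from $i$ (interior, distance $d+1$, distance $d+2$ and beyond), matching the global impulse-response recursion against the augmented-subsystem recursion in the interior, and using the two constraints in \eqref{eqn:augmented_constraints} together with the sparsity of $A$, $B$, and $\bar B$ to close the recursion at the boundary, with $d$-localization following by construction. The only cosmetic differences are the order of the two claims and your appeal to Corollary \ref{cor:Anton}, where the relevant fact is really just that $Z_{AB}\mathbf\Phi = I$ separates column-wise.
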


\begin{proof}
First, from the fact that $\mathbf \Psi^{i}$ is achievable for all $i=1,\dots,N$, we have that $ \Phi_x[0] = I$ by construction. Thus, to show that $\mathbf \Phi$ is achievable, it suffices to show that
\begin{equation*}
    \Phi_x[t+1] = A \Phi_x[t] + B \Phi_u[t], \quad\forall\;0 \leq t \leq T-1.
\end{equation*}
We show this block-column-wise. Specifically, we show that the block columns $\Phi_{x}^{i}$ and $\Phi_{u}^{i}$ associated with each subsystem satisfy
\begin{equation}\label{eq:column_achievability}
        \Phi_{x}^{i}[t+1] = A \Phi_{x}^{i}[t] + B \Phi_{u}^{i}[t], \quad\forall\;0 \leq t \leq T-1.
\end{equation}

We further partition the rows of these block-columns into four subsets as follows:
$$ \Phi_x^i = \begin{bmatrix}
[\Phi_{x}^{i}]_{\textbf{in}_i(d)}^\intercal &
[\Phi_{x}^{i}]_{d+1}^\intercal &
[\Phi_{x}^{i}]_{d+2}^\intercal &
[\Phi_{x}^{i}]_{\textbf{ext}_i(d+2)}^\intercal
\end{bmatrix}^\intercal,
$$
where the notation $[\Phi_{x}^i]_{k}$ represents the entries of $\Phi_x$ corresponding to subsystems $k$-hops away from the $i$-th subsystem. Identical notation holds for the partition of $\Phi_u^i$. 

Using this partition, we have the following for $\Phi_x^i$ and $\Phi_u^i[t]$ given their definition in terms of $\mathbf \Psi$:
\begin{equation*}
    \Phi_{x}^{i}[t] = \begin{bmatrix}
[\Psi_{x}^{i}[t]]_{\textbf{in}_i(d)}\\ 0 \\ 0 \\ 0
\end{bmatrix}, \quad
\Phi_{u}^{i}[t] =  \begin{bmatrix}[\Psi_{u}^{i}[t]]_{\textbf{in}_i(d)}\\
[\Psi_{u}^{i}[t]]_{d+1}\\ 0 \\ 0
\end{bmatrix}.
\end{equation*}
We also partition the dynamics matrices $A$ and $B$ accordingly, where
\begin{gather*}
A = \begin{bmatrix}
A^{\textbf{in}(d)}_{\textbf{in}(d)} & A^{\textbf{in}(d)}_{d+1} & 0 & 0\\
A^{d+1}_{\textbf{in}(d)} & A^{d+1}_{d+1} & A^{d+1}_{d+2} & 0\\
0 & A^{d+2}_{d+1} & A^{d+2}_{d+2} & A^{d+2}_{\textbf{ext}_i(d+2)}\\
0 & 0 & A^{\textbf{ext}_i(d+2)}_{d+2} & A^{\textbf{ext}_i(d+2)}_{\textbf{ext}_i(d+2)}
\end{bmatrix}, \\
B = \begin{bmatrix}
B^{\textbf{in}(d)}_{\textbf{in}(d)} & 0 & 0 & 0\\
0 & B^{d+1}_{d+1} & 0 & 0\\
0 & 0 & B^{d+2}_{d+2} & 0\\
0 & 0 & 0 & B^{\textbf{ext}_i(d+2)}_{\textbf{ext}_i(d+2)}
\end{bmatrix}.
\end{gather*}
Here, the superscript represents an index on the block row and the subscript represents an index on the block column. The sparsity pattern of the partition follows directly from the definition of augmented $d$-localized subsystems and the subsystem dynamics \eqref{eqn:dynamics_i}.

We can now show that equation \eqref{eq:column_achievability} holds for each of the $i^\text{th}$ block-columns and $\Phi^{i}$ is an achievable impulse response of the system. First, note that
\begin{align*}
    [\Phi_{x}^{i}[t+1]]_{\mathbf{in}_i(d)} &= [A\Phi_{x}^{i}[t] + B \Phi_{u}^{i}[t]]_{\textbf{in}_i(d)} \\
    &= A^{\textbf{in}_i(d)}_{\textbf{in}_i(d)} [\Psi_{x}^{i}[t]]_{\textbf{in}_i(d)} + 
        B^{\textbf{in}_i(d)}_{\textbf{in}_i(d)} [\Psi_{u}^{i}[t]]_{\textbf{in}_i(d)} + B^{\textbf{in}_i(d)}_{d+1}[\Psi_{u}^{i}[t]]_{d+1} \\
    &= [\Psi_{x}^{i}[t+1]]_{\mathbf{in}_i(d)},
\end{align*}
where the second equality comes from the sparsity patterns of $A, B$, and $\mathbf \Phi^i$, and the third equality from the achievability of $\mathbf \Psi^{i}$. Similarly, to show that the boundary subsystems satisfy the dynamics, we note that
\begin{align*}
    [\Phi_{x}^{i}[t+1]]_{d+1} &= A^{d+1}_{\mathbf{in}_i(d)} [\Psi_{x}^{(i)}(t)]_{\mathbf{in}_i(d)} + B^{d+1}_{d+1}[\Psi_{u}^{(i)}(t)]_{d+1} \\
    &= [\Psi_{x}^{(i)}(t+1)]_{d+1}\\
    &= 0.
\end{align*}
Lastly, from the sparsity pattern of the dynamic matrices and $\Phi^i[t]$, we trivially have that
\[
[\Phi_{x}^{(i)}(t+1)]_{\textbf{ext}_i(d)} = [A\Phi_{x}^{i}[t] + B \Phi_{u}^{i}[t]]_{\textbf{ext}_i(d)} = 0,
\]
concluding the proof for the achievability of $\mathbf \Phi$. We end by noting that $\mathbf \Phi$ is $d$-localized by construction.
\end{proof}

In light of this result, locality constraints as in Definition \ref{def:locality} i.e., $[\mathbf{\Phi}_{x}]_{ij}=0\ \forall\ i\not\in\mathbf{out}_{j}(d)$, do not need to be imposed on every subsystem $i\not\in\mathbf{out}_{j}(d)$. Instead, it suffices to impose this constraint only on subsystems $i$ at a distance $d+2$ of subsystem $j$. Intuitively, this can be seen as a constraint on the propagation of a signal: if $[w]_j$ has no effect on subsystem $i$ at distance $d+1$ because $[\mathbf{\Phi}_{x}]_{ij}=0$, then the propagation of that signal is stopped and localized within that neighborhood. This idea will allow us to reformulate constraint \eqref{eqn:localized_G} so that it can be imposed with only local communications.

However, despite the fact that locality constraints can now be achieved with local information exchanges, the amount of data that needs to be collected scales with the global size of the network $n$ because we require that the control trajectory be at least PE of order at least $n+L$. In the following theorem, we build upon the previous results and show how this requirement can also be reduced to only depend on the size of a $d$-localized neighborhood. 

\begin{theorem}\label{thm:localized-dd-sls}
Consider the LTI system \eqref{eqn:dynamics} composed of subsystems \eqref{eqn:dynamics_i}, each with controllable $([A]_{\textbf{in}_i(d+2)},[B]_{\textbf{in}_i(d+2)})$ matrices for the augmented $d$-localized subsystem $i$. Assume that there is no driving noise and that the local control trajectory at the $d$-localized subsystem $[\tilde{\mathbf u}]_{\textbf{in}_i(d+1)}$ is PE of order at least $n_{\textbf{in}_i(d)}+L$, where $n_{\textbf{in}_i(d)}$ is the dimension of $[\tilde{\mathbf x}]_{\textbf{in}_i(d)}$. Then, $\mathbf \Phi$ is an achievable $d$-localized system response for each subsystem \eqref{eqn:dynamics_i} if and only if it can be written as
\begin{subequations}\label{eqn:thm_Phi}
\begin{equation} \label{eqn:thm_Phi1}
    [\mathbf \Phi^{i}]_{\textbf{in}_i(d)} = H_L([\tilde{\mathbf x}]_{\textbf{in}_i(d+1)},[\tilde{\mathbf u}]_{\textbf{in}_i(d+1)}) \mathbf G^i, 
\end{equation}
\begin{equation}\label{eqn:thm_Phi2}
    [\mathbf \Phi^{i}]_{\textbf{ext}_i(d+1)} = 0,
\end{equation}
\end{subequations}
where $\mathbf G^i$ satisfies
\begin{subequations}\label{eqn:thm}
\begin{equation}
\label{eqn:thm-achievability}
    H_1([\tilde{\mathbf x}]_{\textbf{in}_i(d+1)})\mathbf G^i =I^{i}, \\
\end{equation}
\begin{equation} 
\label{eqn:thm-locality1}
    H_L([\tilde{\mathbf{x}}]_j)\mathbf G^i=0 \ \forall i, j\text{ s.t. } d+1 \leq \textbf{dist}(j\rightarrow i) \leq d+2,\\
\end{equation}
\begin{equation}
\label{eqn:thm-locality2}
  H_L([\tilde{\mathbf{u}}]_j)\mathbf G^i=0\ \forall i,j\text{ s.t. } \textbf{dist}(j\rightarrow i) = d+2.
\end{equation}
\end{subequations}
  
\end{theorem}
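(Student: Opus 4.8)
The plan is to prove the theorem by combining the reduction to augmented subsystems from Lemma~\ref{lemm:boundary} with a column-wise application of the data-driven SLS parametrization of Corollary~\ref{cor:Anton} to each augmented subsystem. Since the statement is an equivalence and both directions flow through the same correspondence, I would organize the argument around establishing that correspondence once and then reading it in both directions.

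First I would invoke Lemma~\ref{lemm:boundary} to reduce the single global condition to $N$ decoupled local ones. That lemma shows that a global response $\mathbf\Phi$ is achievable and $d$-localized precisely when, for each $i$, its $i$-th block column restricted to $\textbf{in}_i(d+1)$ is an achievable response $\mathbf\Psi^i$ of the standalone augmented $d$-localized subsystem \eqref{eqn:agumented_subsytem} satisfying the boundary zero-constraints \eqref{eqn:augmented_constraints}, with the remaining external rows set to zero via the assembly rule \eqref{eqn:aumgmented_Phi}. That external-row condition is exactly \eqref{eqn:thm_Phi2}, so the theorem reduces to characterizing, for each fixed $i$, the achievable $\mathbf\Psi^i$ satisfying \eqref{eqn:augmented_constraints} using only local data.

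Next I would treat the augmented $d$-localized subsystem as a standalone controllable LTI system --- which is precisely the content of the remark following its definition and is guaranteed by the controllability hypothesis --- and apply Corollary~\ref{cor:Anton} to it. This yields that $\mathbf\Psi^i$ is achievable if and only if $\mathbf\Psi^i = H_L(\cdot)\,\mathbf G^i$ for some $\mathbf G^i$ with $H_1(\cdot)\,\mathbf G^i = I^i$, which are exactly \eqref{eqn:thm_Phi1} and \eqref{eqn:thm-achievability}. The persistency-of-excitation hypothesis is what licenses this step: assuming $[\tilde{\mathbf u}]_{\textbf{in}_i(d+1)}$ is PE of order $n_{\textbf{in}_i(d)}+L$ is exactly the requirement of Lemma~\ref{lemm:Willem} for the augmented subsystem, so Willems' lemma applies using only the locally collected Hankel matrix rather than a global one. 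Finally I would translate the boundary constraints \eqref{eqn:augmented_constraints} on $\mathbf\Psi^i$ into the linear constraints \eqref{eqn:thm-locality1} and \eqref{eqn:thm-locality2} on $\mathbf G^i$, exactly as in the proof of Lemma~\ref{lemm:localized_G}: since $[\mathbf\Psi^i_x]_j = H_L([\tilde{\mathbf x}]_j)\mathbf G^i$ and $[\mathbf\Psi^i_u]_j = H_L([\tilde{\mathbf u}]_j)\mathbf G^i$, the zero-patterns at distances $d+1$ and $d+2$ become precisely the stated Hankel constraints. Reading this correspondence in both directions gives the equivalence.

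The main obstacle I anticipate is the index and dimension bookkeeping in the reduction step, specifically reconciling the fact that the augmented subsystem carries states over $\textbf{in}_i(d+1)$ while the effective excited state dimension appearing in the PE order is only $n_{\textbf{in}_i(d)}$. The delicate point is to verify that, once the boundary constraints \eqref{eqn:augmented_constraints} force the distance-$(d+1)$ and distance-$(d+2)$ portions of the response to vanish, the boundary states folded into the augmented control $[\bar u]_i$ are consistent with the locally collected trajectory $[\tilde{\mathbf x}]_{\textbf{in}_i(d+1)},[\tilde{\mathbf u}]_{\textbf{in}_i(d+1)}$, so that the local Hankel matrix genuinely represents the input-output data of the standalone augmented system and the reduced PE order $n_{\textbf{in}_i(d)}+L$ is the correct one. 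Everything else is a direct restatement of Corollary~\ref{cor:Anton} together with the constraint translation already carried out in Lemma~\ref{lemm:localized_G}.
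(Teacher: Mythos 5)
Your proposal matches the paper's proof essentially step for step: both directions are handled by passing to the augmented $d$-localized subsystems, applying Corollary~\ref{cor:Anton} column-wise with the local PE hypothesis to obtain the $\mathbf G^i$ parametrization, translating the boundary zero-constraints into the Hankel constraints~\eqref{eqn:thm-locality1}--\eqref{eqn:thm-locality2}, and using Lemma~\ref{lemm:boundary} to assemble the global achievable $d$-localized response in the converse direction. Your reading of Lemma~\ref{lemm:boundary} as an equivalence is a slight overstatement of its literal content, but the paper's own forward direction relies on the same unproved converse (that a $d$-localized global response restricts to an achievable augmented-subsystem response), so this is consistent with the paper's argument.
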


\vspace{1mm} \begin{proof}
($\Rightarrow$) We first show that all $d$-localized system responses $\mathbf \Phi$ can be parameterized by a corresponding set of matrices $\{\mathbf G^{i}\}_{i=1}^N$. First, we note that since $\mathbf \Phi$ is $d$-localized, each $d$-localized subsystem impulse response $[\mathbf \Phi^{i}]_{\textbf{in}_i(d+1)}$ is achievable on the augmented $d$-localized subsystem $i$. Thus, from applying Corollary \ref{cor:Anton}, we have that there exists $\mathbf G^i$ satisfying constraint \eqref{eqn:thm-achievability} such that 
\[
[\mathbf \Phi^{i}]_{\textbf{in}_i(d)} = H_L([\tilde{\mathbf x}]_{\textbf{in}_i(d)}, [\tilde{\mathbf u}]_{\textbf{in}_i(d+1)}) \mathbf G^i.
\]
Since $\mathbf \Phi$ is $d$-localized, we have that $\mathbf G^i$ satisfies both constraints \eqref{eqn:thm-locality1} and \eqref{eqn:thm-locality2}, concluding the proof in this direction.\\
($\Leftarrow$) Now we show that if each $\mathbf G^i$ satisfies the constraint \eqref{eqn:thm} for all $i=1,\dots,N$, then the resulting $\mathbf \Phi$ is achievable and localized. Consider the augmented $d$-localized subsystem $i$ and define 
\[
\mathbf \Psi^{i} =  H_L([\tilde{\mathbf x}]_{\textbf{in}_i(d)},[\tilde{\mathbf u}]_{\textbf{in}_i(d+1)}) \mathbf G^i.
\]
From Corollary \ref{cor:Anton}, we have that $\mathbf \Psi^{i}$ is an achievable impulse response on the augmented $d$-localized subsystem $i$. Moreover, by construction it satisfies the sparsity condition in Lemma \ref{lemm:boundary}. Thus, constructing $\mathbf \Phi$ using $\mathbf \Psi$ as in equation \eqref{eqn:aumgmented_Phi} we conclude that $\mathbf \Phi$ is an achievable and $d$-localized system response.
\end{proof}

\begin{corollary}
Consider a function $g: \mathbf \Phi \to \mathbb{R}$ such that
\[g(\mathbf\Phi) = \sum_{i=1}^{N}g^i([\mathbf \Phi^{i}]_{\textbf{in}_i(d+1)}).\]
Then, solving the optimization problem
\[ \min\ g(\mathbf \Phi)\ s.t.\ Z_{AB}\mathbf \Phi = I,\ \mathbf \Phi \in \mathcal L_d\]
is equivalent to solving
\[ \min\ \sum_{i=1}^{N}g^i\Big(H_L\big([\tilde{\mathbf x}]_{\textbf{in}_i(d)}, [\tilde{\mathbf u}]_{\textbf{in}_i(d+1)}\big)\mathbf G^i\Big)\]
\[s.t.\ \mathbf G^i\ \text{satisfies}\ \eqref{eqn:thm}\text{ for all } i=1\dots,N,\]
and then constructing the $d$-localized system response $\mathbf{\Phi}$ as per equation \eqref{eqn:thm_Phi}.
\end{corollary}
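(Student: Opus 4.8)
The plan is to obtain the corollary as an essentially immediate consequence of Theorem~\ref{thm:localized-dd-sls}, whose role was precisely to characterize the feasible set of the model-based problem. First I would observe that the constraint set of the left-hand problem, $\{\mathbf\Phi:\ Z_{AB}\mathbf\Phi=I,\ \mathbf\Phi\in\mathcal L_d\}$, is by definition exactly the set of achievable $d$-localized system responses. Theorem~\ref{thm:localized-dd-sls} states that $\mathbf\Phi$ lies in this set if and only if there exist matrices $\{\mathbf G^i\}_{i=1}^N$ satisfying \eqref{eqn:thm} such that $\mathbf\Phi$ is given by the reconstruction \eqref{eqn:thm_Phi}. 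Read as a statement about sets, this means the reconstruction map \eqref{eqn:thm_Phi} sends the collection of all $\{\mathbf G^i\}$ satisfying \eqref{eqn:thm} \emph{onto} the feasible set of the left-hand problem (the forward implication gives containment of the image, the reverse gives the opposite containment). This establishes the correspondence between the two feasible regions.

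Next I would match the objectives along this correspondence. Since $g(\mathbf\Phi)=\sum_{i=1}^N g^i([\mathbf\Phi^i]_{\textbf{in}_i(d+1)})$ depends on each block column $\mathbf\Phi^i$ only through its local block $[\mathbf\Phi^i]_{\textbf{in}_i(d+1)}$, it suffices to express this local block in terms of $\mathbf G^i$. By \eqref{eqn:thm_Phi1} the entries of $\mathbf\Phi^i$ inside the neighborhood are exactly $H_L([\tilde{\mathbf x}]_{\textbf{in}_i(d)},[\tilde{\mathbf u}]_{\textbf{in}_i(d+1)})\mathbf G^i$, while \eqref{eqn:thm_Phi2} forces the remaining entries to vanish and hence not to enter $g^i$. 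Therefore $g^i([\mathbf\Phi^i]_{\textbf{in}_i(d+1)})=g^i\big(H_L([\tilde{\mathbf x}]_{\textbf{in}_i(d)},[\tilde{\mathbf u}]_{\textbf{in}_i(d+1)})\mathbf G^i\big)$ for each $i$, and summing over $i$ shows that the two objectives agree at every pair $(\mathbf\Phi,\{\mathbf G^i\})$ related by \eqref{eqn:thm_Phi}.

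With both the feasible sets and the objective values in correspondence, I would conclude that the two minimization problems share the same optimal value, and that any minimizer $\{\mathbf G^i\}$ of the data-driven problem yields a minimizer $\mathbf\Phi$ of the model-based problem through \eqref{eqn:thm_Phi}. The one point requiring care — and what I expect to be the main, though mild, obstacle — is that the map $\mathbf G^i\mapsto\mathbf\Phi^i$ need not be injective: the Hankel matrix may have a nontrivial nullspace, so several $\mathbf G^i$ can realize the same $\mathbf\Phi^i$. This does not affect the argument because $g$ is a function of $\mathbf\Phi$ alone, so all preimages of a given response carry the same cost and the infima over the two feasible sets coincide regardless of multiplicity. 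A secondary bookkeeping check is to confirm that the support $\textbf{in}_i(d+1)$ on which $g^i$ acts is precisely the region reconstructed by \eqref{eqn:thm_Phi} — recalling that $\mathbf\Phi_x$ is $d$-localized whereas $\mathbf\Phi_u$ is $(d+1)$-localized, so the cost-relevant $\mathbf\Phi_x$ entries live on $\textbf{in}_i(d)$ and the $\mathbf\Phi_u$ entries on $\textbf{in}_i(d+1)$ — so that no cost-relevant entry of $\mathbf\Phi$ is left undetermined by $\mathbf G^i$.
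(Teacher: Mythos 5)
Your argument is correct and matches the paper's (implicit) reasoning: the paper states this corollary without proof, treating it as an immediate consequence of Theorem~\ref{thm:localized-dd-sls}, which is exactly the bijection-of-feasible-sets-plus-matching-objectives argument you spell out. Your two side remarks — that non-injectivity of $\mathbf G^i\mapsto\mathbf\Phi^i$ is harmless since the cost depends only on $\mathbf\Phi$, and that the $\textbf{in}_i(d)$ versus $\textbf{in}_i(d+1)$ supports must be reconciled via the $d$- and $(d+1)$-localization of $\mathbf\Phi_x$ and $\mathbf\Phi_u$ — are both apt and address the only points the paper leaves unstated.
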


This result provides a data-driven approach in which locality constraints, as in equation \eqref{eqn:comms}, can be seamlessly considered and imposed by means of an affine subspace where only local information exchanges are necessary. Moreover, the amount of data needed to parametrize the behavior of the system does not scale with the size of the network but rather with $d$, the size of the localized region, which is usually much smaller than $n$. To the best of our knowledge, this is the first such result. As we show next, this will prove key in extending data-driven SLS to the distributed setting.

\section{Distributed AND Localized algorithm for Data-driven MPC}\label{sec:admm}

In this section we make use of the results on localized data-driven SLS from previous sections and apply them to reformulate the MPC subproblem \eqref{eqn:MPC}. We provide a distributed and localized algorithmic solution that does not scale with the size of the network. Lastly, we comment on the theoretical guarantees of this data-driven DLMPC (D$^3$LMPC) approach in terms of convergence, recursive feasibility and asymptotic stability.

\subsection{System Level Synthesis reformulation of data-driven MPC}
In light of Theorem \ref{thm:localized-dd-sls}, we can write the MPC subproblem \eqref{eqn:MPC} in terms of the variable $\mathbf G$ and localized Hankel matrices $H_L([\tilde{\mathbf x}]_{\mathbf{in}_i(d+2)}, [\tilde{\mathbf u}]_{{\mathbf{in}_i(d+2)}})$. To do this, we proceed as in reformulation \eqref{eqn:DLMPC} and rely on the equivalence between standard and data-driven SLS parametrizations, i.e., $\mathbf \Phi =  H_L(\tilde{\mathbf{x}},\tilde{\mathbf{u}}) \mathbf G \Leftrightarrow Z_{AB}\mathbf \Phi = I$. We make use of Lemma \ref{thm:localized-dd-sls} to recast the locality constraints \eqref{eqn:comms} into local affine constraints. Hence, we rewrite problem \eqref{eqn:MPC} as 
\begin{equation} \label{eqn:dd-DLMPC}
\begin{aligned}
& \underset{\mathbf \Phi, \{\mathbf G^i\}_{i=1}^N}{\text{minimize}}
& &f(\mathbf \Phi x_0)\\
& \ \text{s.t.} &  &\begin{aligned} 
     & x_0 = x(t),\ \mathbf\Phi_x x_{0}\in\mathcal{X}, \ \mathbf\Phi_u x_{0}\in\mathcal{U},\\
     & [\mathbf \Phi^{i}]_{\textbf{in}_i(d)} = H_L([\tilde{\mathbf x}]_{\textbf{in}_i(d)},[\tilde{\mathbf u}]_{\textbf{in}_i(d+1)}) \mathbf G^i, \\ 
     &\mathbf G^i \text{ satisfies } \eqref{eqn:thm} \ \forall i=1,\dots,N.
\end{aligned}
\end{aligned}
\end{equation}

By introducing duplicate decision variables $\mathbf \Phi$ and $\mathbf G$, and rewriting the achievability and localization constraints in terms of the variable $\mathbf{G}$ by means of equation \eqref{eqn:thm}, the problem now enjoys a partially separable structure \citep{amoalonso_distributed_2020}. In what follows, we make such structure explicit and take advantage of it to distribute the problem across different subsystems via ADMM.

\subsection{A distributed subproblem solution via ADMM}

To rewrite \eqref{eqn:dd-DLMPC} and take advantage of the separability features in Assumption \ref{assump: locality}, we introduce the concept of row-wise separability and column-wise separability :
\begin{definition}
Given the partition $\{\mathfrak r_1,...,\mathfrak r_k\}$, a functional/set is \emph{row-wise separable} if:
\begin{itemize}
\item For a functional, $g(\mathbf \Phi) = \sum_{i = 1}^k g_i\big(\mathbf \Phi(\mathfrak{r}_i,:)\big)$ for some functionals $g_i$ for $i=1,...,k$.
\item For a set, $\mathbf \Phi \in \mathcal P$ if and only if $\mathbf \Phi(\mathfrak{r}_i,:) \in \mathcal P_i \ \forall i$ for some sets $\mathcal P_i$ for $i=1,...,k$.
\end{itemize}
\end{definition}
An analogous definition exists for \emph{column-wise separable} functionals and sets, where the partition $\{\mathfrak c_1,...,\mathfrak c_k\}$ entails the columns of $\mathbf \Phi$, i.e., $\mathbf \Phi(:,\mathfrak{c}_i)$. 

By Assumption \ref{assump: locality} the objective function and the safety/saturation constraints in equation \eqref{eqn:dd-DLMPC} are row-separable in terms of $\mathbf \Phi$. At the same time, the achievability and locality constraints \eqref{eqn:thm_Phi}, \eqref{eqn:thm} are column-separable in terms of $\mathbf G$. Hence, the data-driven MPC subroutine becomes:
\begin{equation} \label{eqn:dd-DLMPC1}
\begin{aligned}
& \underset{\mathbf \Phi, \mathbf \Psi, \{\mathbf G^i\}_{i=1}^N}{\text{minimize}}
& &\sum_{i=1}^{N}f^i([\mathbf \Phi]_i[x_{0}]_{\textbf{in}_i(d)})\\
& \ \text{s.t.} &  &\begin{aligned} 
     & x_0 = x(t),\ \mathbf G^i \text{ satisfies }\eqref{eqn:thm},\\
     &[\mathbf \Phi_x]_i [x_{0}]_{\textbf{in}_i(d)}\in\mathcal{X}^i, \ [\mathbf \Phi_u]_i [x_{0}]_{\textbf{in}_i(d)}\in\mathcal{U}^i, \quad \forall i=1,\dots,N,\\
     & [\mathbf \Psi^{i}]_{\textbf{in}_i(d)} = H_L([\tilde{\mathbf x}]_{\textbf{in}_i(d)},[\tilde{\mathbf u}]_{\textbf{in}_i(d+1)}) \mathbf G^i, \quad \forall i=1,\dots,N, \\
     & \mathbf \Phi = \mathbf \Psi.
\end{aligned}
\end{aligned}
\end{equation}

Notice that the objective function and the constraints decompose across the $d$-localized neighborhoods of each subsystem $i$. Given this structure, we can make use of ADMM to decompose this problem into row-wise local subproblems in terms of $\mathbf [\mathbf \Phi]_i$ and column-wise local subproblems in terms of $\mathbf G^i$ and $\mathbf \Psi^i$, both of which can also be parallelized across the subsystems. The ADMM subroutine iteratively updates the variables as
\begin{subequations}
\begin{align}
&\begin{aligned}
    \relax [\mathbf \Phi]_i^{\{k+1\}} &= \left\{
    \begin{aligned} \underset{[\mathbf \Phi]_i}{\text{argmin}} &\quad f^i([\mathbf \Phi]_i[x_{0}]_{\textbf{in}_i(d)}) + \frac{\rho}{2} \left\lVert g^i(\mathbf \Phi,\mathbf \Psi^{\{k\}},\mathbf \Lambda^{\{k\}}) \right\rVert_F^2 \\
    s.t. &\quad [\mathbf \Phi_x]_i [x_{0}]_{\textbf{in}_i(d)}\in\mathcal{X}^i, \\ &\quad [\mathbf \Phi_u]_i [x_{0}]_{\textbf{in}_i(d)}\in\mathcal{U}^i,\\ &\quad x_0 = x(t).
    \end{aligned} \right\}\label{eqn:update-Phi}
\end{aligned}\\
&\begin{aligned}
    \relax [\mathbf \Psi^{i}]_{\textbf{in}_i(d)}^{\{k+1\}} &= \left\{ \begin{aligned} \underset{[\mathbf \Psi^{i}]_{\textbf{in}_i(d)},\mathbf G^i}{\text{argmin}} &\quad \lVert g^{\textbf{in}_i(d)}\big([\mathbf \Phi^i]^{\{k+1\}}, [\mathbf \Psi^i],  [\mathbf \Lambda^i]^{\{k\}} \big)\rVert^2_F \\
    s.t. &\quad [\mathbf \Psi^{i}]_{\textbf{in}_i(d)} = [H_L(\tilde{\mathbf x},\tilde{\mathbf u})]_{\textbf{in}_i(d)}\mathbf G^i,\\ 
    &\quad \mathbf G^i\ \text{satisfies \eqref{eqn:thm}}.
    \end{aligned}  \right\}\label{eqn:update-Psi}
\end{aligned} \\
&[\mathbf \Lambda]_i^{\{k+1\}} = g^i( \mathbf \Phi^{\{k+1\}},\mathbf \Psi^{\{k+1\}},\mathbf \Lambda^{\{k\}})
\label{eqn:update-Lambda}
\end{align}
\end{subequations}
where we define 
$$g^*(\mathbf\Phi,\mathbf\Psi,\mathbf\Lambda):= [\mathbf \Phi]_* - [\mathbf \Psi]_* + [\mathbf \Lambda]_*$$
with $*$ denoting a subsystem or collection of subsystems, and
$$[H_L(\tilde{\mathbf x},\tilde{\mathbf u})]_{\textbf{in}_i(d)}:=H_L([\tilde{\mathbf x}]_{\textbf{in}_i(d)},[\tilde{\mathbf u}]_{\textbf{in}_i(d+1)}).$$

We note that to solve this subroutine, and in particular optimization \eqref{eqn:update-Psi}, each subsystem only needs to collect trajectory of states and control actions of subsystems that are at most $d+2$ hops away. This subroutine thus constitutes a distributed and localized solution. We also emphasize that the trajectory length only needs to scale with the $d$-localized system size instead of the global system size. The full algorithm is given in Algorithm \ref{alg:dd-DLMPC}.

\begin{algorithm}
\caption{Subsystem's $i$ implementation of D$^3$LMPC}\label{alg:dd-DLMPC}
\begin{algorithmic}[1]
\Statex \textbf{Input:} tolerance parameters $\epsilon_p,\epsilon_d >0$, Hankel matrices $[H_L(\tilde{\mathbf{x}}, \tilde{\mathbf{u}})]_{\mathbf{in}_i(d+1)}$ constructed from arbitrarily generated PE trajectories $\tilde{\mathbf{x}}_{\mathbf{in}_i(d+1)}, \tilde{\mathbf{u}}_{\mathbf{in}_i(d+1)}$.
\State Measure local state $[x_0]_{i}$, $k\leftarrow0$.
\State Share measurement $[x_0]_{i}$ with $\textbf{out}_{i}(d)$ and receive $[x_0]_{j}$ from $j\in\textbf{in}_{i}(d)$.
\State Solve optimization  problem (\ref{eqn:update-Phi}).
\State Share $[\mathbf \Phi]_{i}^{\{k+1\}}$ with $\textbf{out}_{i}(d)$. Receive the corresponding $[\mathbf \Phi]_{j}^{\{k+1\}}$ from $\textbf{in}_{i}(d)$ and construct $[\mathbf \Phi^i]_{\textbf{in}_i(d)}^{\{k+1\}}$.
\State Perform update (\ref{eqn:update-Psi}).
\State Share $[\mathbf \Psi^i]_{\textbf{in}_i(d)}^{\{k+1\}}$ with $\textbf{out}_{i}(d)$. Receive the corresponding $[\mathbf \Psi^j]_{\textbf{in}_i(d)}^{\{k+1\}}$ from $j\in\textbf{in}_{i}(d)$ and construct $[\mathbf \Psi]_{i}^{\{k+1\}}$.
\State Perform update (\ref{eqn:update-Lambda}).
\State \textbf{if}  $\left\Vert [\mathbf \Psi]_{i}^{\{k+1\}}-[\mathbf \Phi]_{i}^{\{k+1\}}\right\Vert_F\leq\epsilon_{p}$ and $\left\Vert [[\mathbf \Phi]_{i}^{\{k+1\}}-[\mathbf \Phi]_{i}^{\{k\}} \right\Vert_F\leq\epsilon_{d}$\textbf{:}
\Statex $\;\;$ Apply computed control action: $\,$ $[u_0]_i = \mathcal H_{L}([\tilde{\mathbf u}]_{\textbf{in}_i(d)}) \mathbf G^i [x_{0}]_{\textbf{in}_i(d)}$, and return to step 1.
\Statex \textbf{else:} 
\Statex $\;\;$ Set $k\leftarrow+1$ and return to step 3.
\end{algorithmic}
\end{algorithm}

\subsection{Theoretical Guarantees}

It is worth noting that this reformulation is equivalent to the closed-loop DLMPC also introduced in \citet{amoalonso_implementation_2021}, with the achievability constraint $Z_{AB}\mathbf \Phi = I$ replaced by the data-driven parametrization in terms of $\mathbf G$ and the Hankel matrix $H_L$. For this reason, guarantees derived for the DLMPC formulation \citep{amoalonso_theoretical_2021} directly apply to problem \eqref{eqn:dd-DLMPC} when the constraint sets $\mathcal X$ and $\mathcal U$ are polytopes.

\subsubsection{Convergence}

Algorithm \ref{alg:dd-DLMPC} relies on ADMM to separate both row, and column-wise computations, in terms of $\mathbf\Phi$ and $\mathbf G$ respectively. Each of these is then distributed into the subsystems in the network, and a communication protocol is established to ensure the ADMM steps are properly followed. Hence, one can guarantee the convergence of the data-driven version of DLMPC in the same way that convergence of model-based DLMPC is shown: by leveraging the convergence result of ADMM in [19]. For additional details see Lemma 2 in \citet{amoalonso_theoretical_2021}.

\subsubsection{Recursive Feasibility}

Recursive feasibility for formulation \eqref{eqn:DLMPC} is guaranteed by means of a localized maximally invariant terminal set $\mathcal X_T$. This set can be computed in a distributed manner and with local information only as described in \citet{amoalonso_theoretical_2021}. In particular, a closed-loop map $\mathbf \Phi$ for the unconstrained localized closed-loop system has to be computed. In the model-based SLS problem with quadratic cost, a solution exists for the infinite-horizon case \citep{yu_localized_2020}, which can be done in a distributed manner and with local information only. When no model is available, the same problem can be solved using the localized data-driven SLS approach introduced in \S IV with a finite-horizon approximation, which also allows for a distributed synthesis with only local data. The length of the time horizon chosen to solve the localized data-driven SLS problem might slightly impact the conservativeness of the terminal set, but since the conservatism in the FIR approach decays exponentially with the horizon length, this harm in performance is not expected to be substantial for usual values of the horizon. Once $\mathbf \Phi$ for the unconstrained localized closed-loop system has been computed, Algorithm 2 in \citet{amoalonso_theoretical_2021} can be used to synthesize this terminal set in a distributed and localized manner. Therefore, a terminal set that guarantees recursive feasibility for D$^3$LMPC can be computed in a distributed manner offline using only local information and without the need for a system model.

\subsubsection{Asymptotic Stability}
Similar to recursive feasibility, asymptotic stability for the D$^3$LMPC problem \eqref{eqn:dd-DLMPC} is directly inherited from the asymptotic stability guarantee for model-based DLMPC. In particular, adding a terminal cost based on the terminal set previously described is a sufficient condition to guarantee asymptotic stability of the DLMPC problem \citep{amoalonso_theoretical_2021}. Moreover, such cost can be incorporated in the D$^3$LMPC formulation in the same way as in the model-based DLMPC problem, and the structure of the resulting problem is analogous. This terminal cost introduces coupling among subsystems, but the coupling can be dealt with by solving step 3 of Algorithm \ref{alg:dd-DLMPC} via local consensus. Notice that since step 3 of Algorithm \ref{alg:dd-DLMPC} is written in terms of $\mathbf{\Phi}$, Algorithm 3 in \citet{amoalonso_theoretical_2021} can be directly used to handle this coupling. For additional details the reader is referred to \citet{amoalonso_theoretical_2021}.

\section{Simulation Experiments} \label{simulation}

We demonstrate through experiments that the D$^3$LMPC controller using only local data performs as well as a model-based DLMPC controller. We also show that for our algorithm, both runtime and the dimension of the data needed scale well with the size of the network. All the code needed to reproduce the experiments can be found at \url{https://github.com/unstable-zeros/dl-mpc-sls}.

\subsection{Setup} \label{sec:experiment-setup}
We evaluate the performance and scalablity of our algorithm on a system composed of a chain of subsystems, i.e., that $\mathcal E = \{ (i, i+1), (i+1, i),\ i=1,...,N-1\}$. Each subsystem $i$ has a 2-dimensional state $[\theta_i,\omega_i]^\intercal$, where $\theta_i$ and $\omega_i$ are respectively the phase angle deviation and frequency deviation of the subsystem. We assume that each subsystem takes a scalar control action $u_i$. We consider the same type of dynamic coupling between the subsystem as that in  \citet{amoalonso_implementation_2021}, which is given as
\[
\begin{bmatrix} \theta(t+1) \\ \omega(t+1) \end{bmatrix}_i = \sum_{j \in \textbf{in}_i(1)} [A]_{ij}
\begin{bmatrix} \theta(t) \\ \omega(t) \end{bmatrix}_i + [B]_i [u]_i,
\]
where
\[
[A]_{ii} = \begin{bmatrix} 1 & \Delta t \\ -\frac{k_i}{m_i}\Delta t & 1-\frac{d_i}{m_i}\Delta t \end{bmatrix}, \quad [A]_{ij} = \begin{bmatrix} 0 & 0 \\ \frac{k_{ij}}{m_i}\Delta t & 0\end{bmatrix},
\]
and $B_{ii} = [0\;1]^\top$ for all $i$. The parameters $m_i, d_i, k_{ij}$ are sampled uniformly at random from the intervals $[0, 2], [0.5, 1], [1, 1.5]$, respectively and $k_i := \sum_{j \in \textbf{in}_i(1)}k_{ij}$. Finally, the discretization time is set to be $\Delta t = 0.2$. The goal of the controller is to minimize the LQR cost with $Q=I$, $R=I$.

To show the optimality of our approach in Section~\ref{sec:experiment-optimality}, we consider a base system with 64 subsystems. To demonstrate the scalability of our method, we consider systems of varying sizes for our experiment in Section~\ref{sec:experiment-scalability}. Unless mentioned otherwise, for all the experiments, we consider a locality region of size $d=2$, use a planning horizon of $T=5$ steps and simulate the system forward for $30$ steps.

\begin{figure*}[ht]
    \centering
    \includegraphics[width=\textwidth]{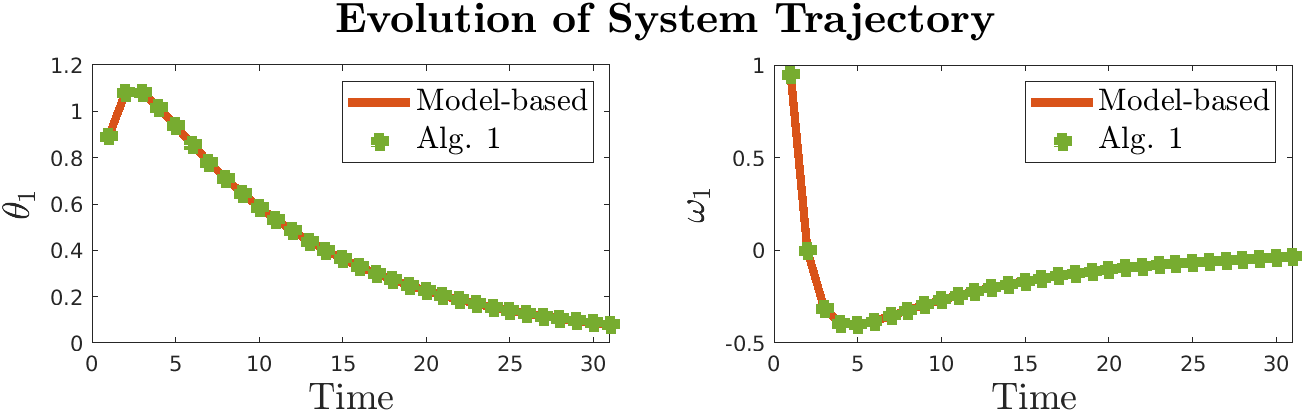}
    \caption{The trajectory generated by the model-based DLMPC algorithm (solid orange line) vs. the trajectory generated by D$^3$LMPC (green circles). We observe that the two coincides.}
    \label{fig:optimality-chain}
\end{figure*}

\subsection{Optimal performance} \label{sec:experiment-optimality}
We evaluate the performance of D$^3$LMPC (Alg.~\ref{alg:dd-DLMPC}) on the system described in Section~\ref{sec:experiment-setup}. First, we show that the trajectory given by the D$^3$LMPC algorithm matches the trajectory generated by a model-based DLMPC solved with a centralized solver \citep{cvx,gurobi}. Notice that the model-based DLMPC solves the optimization problem \eqref{eqn:DLMPC} with perfect knowledge of the system dynamics, while the D$^3$LMPC  algorithm (Alg.~\ref{alg:dd-DLMPC}) only has access to local past trajectories. Due to space constraints, we only show the state trajectory of the first subsystem (Figure~\ref{fig:optimality-chain}). We observe that the trajectory generated by our controller matches the trajectory of the optimal controller with the same locality region size. Further, the optimal cost for both schemes is the same up to numerical precision. This confirms that the D$^3$LMPC algorithm (Alg.~\ref{alg:dd-DLMPC}) can synthesize optimal controllers using only local trajectory data and no knowledge of the system dynamics.

\begin{figure}[h]
    \centering
    \includegraphics[width=0.6\textwidth]{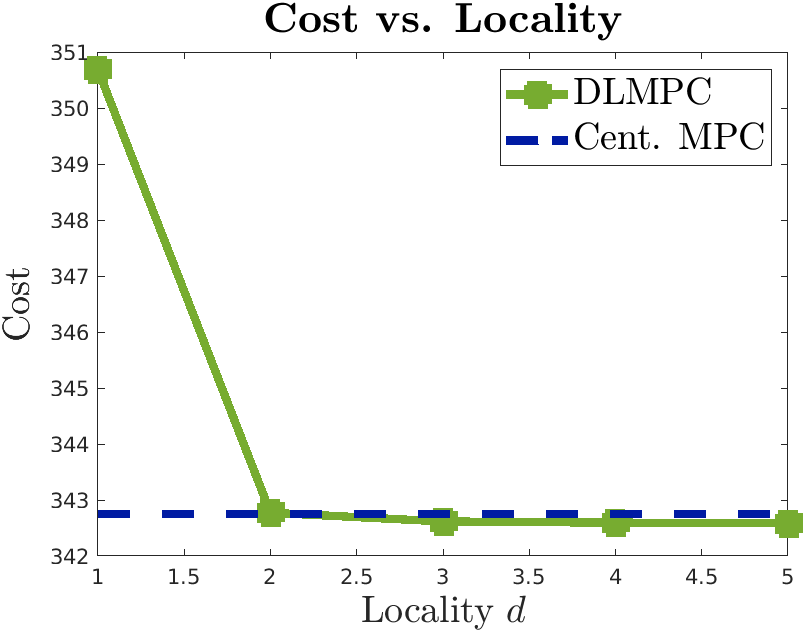}
    \caption{(Green) The cost achieved by the optimal controller versus the size of the locality region for the system response. (Blue) The cost achieved by a centralized MPC controller, i.e., that it has no locality constraints. We observe that cost for the distributed controllers decreases as the size of the locality region grows and approaches the cost of the centralized controller.}
    \label{fig:cost-locality-chain}
\end{figure}
\begin{figure}[h]
    \centering
    \includegraphics[width=0.6\textwidth]{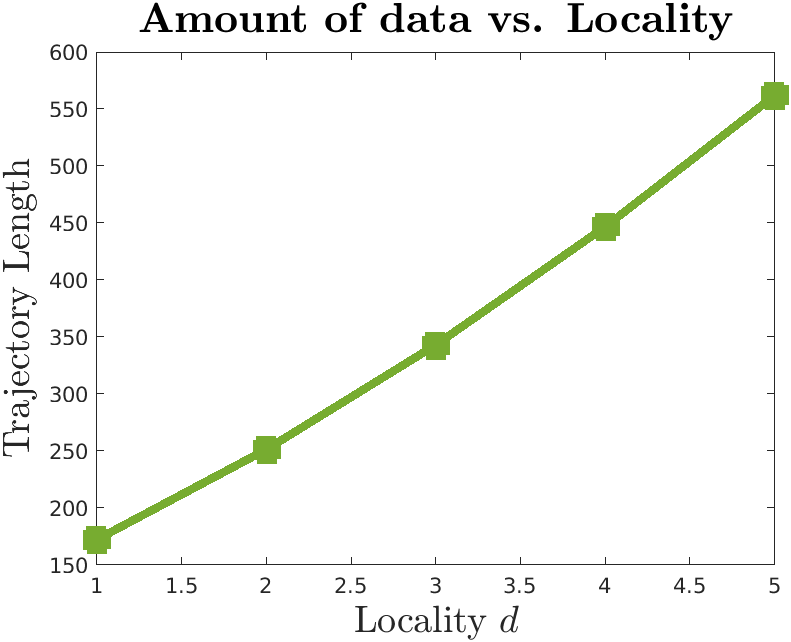}
    \caption{The growth of necessary length of collected trajectory versus the size of the locality region of system response. The trajectory length grows with the size of the locality region.}
    \label{fig:tl-d-chain}
\end{figure}

We further highlight the relevance of locality region size on the optimality of the solution. The size of the locality region $d$ can be seen as a design parameter in Alg.~\ref{alg:dd-DLMPC} that allows one to tradeoff between computation complexity and performance of the controller. In Figure~\ref{fig:cost-locality-chain}, we show how the optimal cost varies with the size of the locality region on the same system. As the size of the locality region grows, the optimal cost decreases. This matches the intuition that by allowing each subsystem to influence more subsystems, and as more information is made available to each subsystem, controllers of better quality can be synthesized. We note that the performance improvement by increasing the locality region size is the most significant when the locality region is small. In Figure~\ref{fig:tl-d-chain} we simultaneously show how much the trajectory length needs to grow with the size of the locality region to satisfy the persistence of excitation condition for applying Willem's Fundamental Lemma. We note that the growth in the necessary trajectory length not only means longer trajectory needs to be collected, but also means that the size of the optimization problem grows, thus incurring higher computation complexity for each optimization step.  Hence, the choice of an optimal $d$ heavily depends on the specific application considered.

\subsection{Scalability} \label{sec:experiment-scalability}
First, we show that the runtime of our method scales well with the size of the global system. We consider systems composed of $9,\ 16,\ 36,\ 64,\ 81,\ 100,$ and $121$ subsystems. For each system size, we randomly generate $10$ different systems and report the average computation time per MPC step.\footnote{Runtime is measured after the first iteration to compute the runtime of the MPC algorithm after warmstart. The optimization problems were solved using the Gurobi \citep{gurobi} optimizer on a personal desktop computer with an 8-core Intel i7 processor.} The result is shown in Figure~\ref{fig:scalability-chain}. We note that the runtime only increased $2\times$ while the size of the system has increased more than $12\times$. Further, the growth of the runtime flattens as the size of the network grows, suggesting that our method scales well on sparsely connected systems. This trend has previously been observed with ADMM schemes for MPC \citep{conte_computational_2012}.
\begin{figure}[H]
    \centering
    \includegraphics[width=0.6\textwidth]{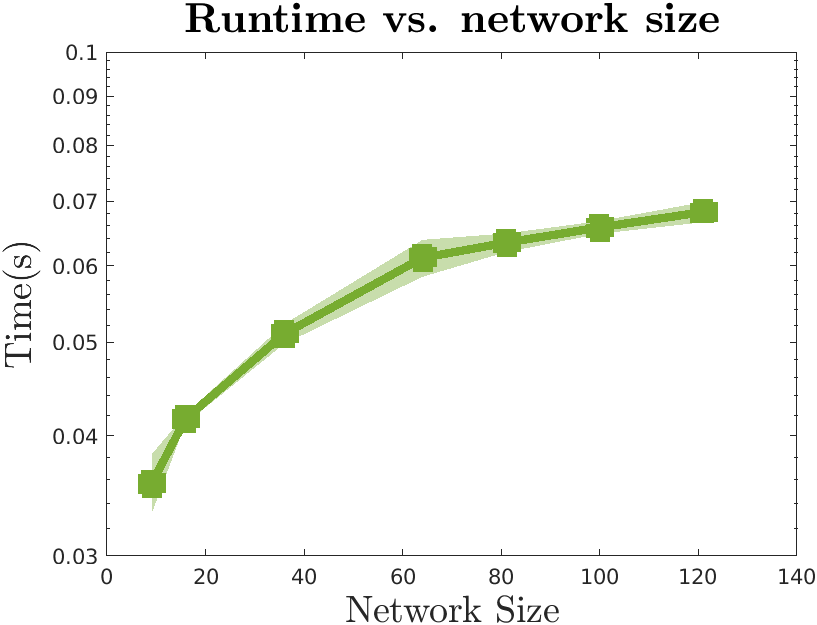}
    \caption{The average per-step per-subsystem runtime of the MPC algorithm. The solid line shows the average runtime over 10 randomly generated systems, and the circles represent the runtime for each of the 10 randomly generated instances for each system size.}
    \label{fig:scalability-chain}
\end{figure}

\begin{figure}[H]
    \centering
    \includegraphics[width=0.6\textwidth]{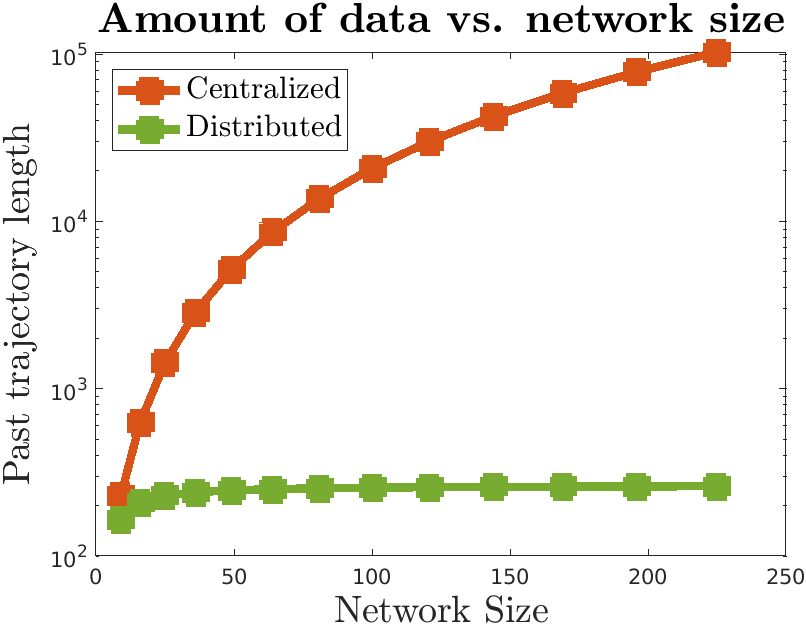}
    \caption{Length of necessary trajectory length versus network size. Note that this is plotted on a semilog axis. Our distributed approach requires much shorter trajectory over a centralized data-driven approach.}
    \label{fig:tl-network-chain}
\end{figure}

Next, we show that the length of the trajectory that needs to be collected for the D$^3$LMPC controller grows much more slowly than that for a centralized data-driven method that does not exploit the locality structure of the problem (equivalent to solving the SLS problem with constraints \eqref{eqn:localized_G} instead of \eqref{eqn:thm}). The result is shown in Figure~\ref{fig:tl-network-chain}. We note that our method requires less data (length of the trajectory) to be collected in general. At the same time, the larger the system, the more benefit one gets from using our method over a centralized data-driven approach. 

\section{Conclusion}

In this paper we define and analyze a data-driven Distributed and Localized Model Predictive Control (D$^3$LMPC) scheme. This approach can synthesize optimal localized control policies using only local communication and requires no knowledge of the system model. We base our results on the data-driven SLS approach \citep{xue_data_2021}, and extend this framework to allow for locality constraints. We then use these results to provide an alternative data-driven synthesis for the DLMPC algorithm \citep{amoalonso_distributed_2020} by exploiting the separability of the problem via ADMM. The resulting algorithm enjoys the same scalability properties as model-based DLMPC \citep{amoalonso_implementation_2021} and only need trajectory data that scales with the size of the $d$-localized neighborhood. Moreover, recursive feasibility and stability guarantees that exist for model-based DLMPC \citep{amoalonso_theoretical_2021} directly apply to this framework.

The work presented here is, to the best of our knowledge, the first fully distributed and localized data-driven MPC approach that achieves globally optimal performance with local information collection and communication among subsystems. This, when extended to the noisy settings, offers a promising avenue forward towards localized and scalable learning and control with guarantees.




\bibliographystyle{unsrtnat}
\bibliography{bibliography/references}

\begin{thebibliography}{26}
\providecommand{\natexlab}[1]{#1}
\providecommand{\url}[1]{\texttt{#1}}
\expandafter\ifx\csname urlstyle\endcsname\relax
  \providecommand{\doi}[1]{doi: #1}\else
  \providecommand{\doi}{doi: \begingroup \urlstyle{rm}\Url}\fi

\bibitem[Hjalmarsson et~al.(1998)Hjalmarsson, Gevers, Gunnarsson, and
  Lequin]{hjalmarsson_iterative_1998}
H.~Hjalmarsson, M.~Gevers, S.~Gunnarsson, and O.~Lequin.
\newblock Iterative feedback tuning: theory and applications.
\newblock \emph{IEEE Control Syst. Mag.}, 18:\penalty0 26--41, 1998.

\bibitem[Fazel et~al.(2018)Fazel, Ge, Kakade, and Mesbahi]{fazel_global_2018}
Maryam Fazel, Rong Ge, Sham Kakade, and Mehran Mesbahi.
\newblock Global convergence of policy gradient methods for the linear
  quadratic regulator.
\newblock In \emph{Int. Conf. Mach. Learn.}, pages 1467--1476. PMLR, 2018.

\bibitem[Mohammadi et~al.(2020)Mohammadi, Soltanolkotabi, and
  Jovanovi{\'c}]{mohammadi_linear_2020}
Hesameddin Mohammadi, Mahdi Soltanolkotabi, and Mihailo~R Jovanovi{\'c}.
\newblock On the linear convergence of random search for discrete-time lqr.
\newblock \emph{IEEE Control Syst. Lett.}, 5\penalty0 (3):\penalty0 989--994,
  2020.

\bibitem[Bradtke et~al.(1994)Bradtke, Ydstie, and Barto]{bradtke_adaptive_1994}
S.~Bradtke, B.~Ydstie, and A.~Barto.
\newblock Adaptive linear quadratic control using policy iteration.
\newblock In \emph{Proc. ACC}. IEEE, 1994.

\bibitem[De~Persis and Tesi(2019)]{persis_formulas_2019}
Claudio De~Persis and Pietro Tesi.
\newblock Formulas for data-driven control: Stabilization, optimality, and
  robustness.
\newblock \emph{IEEE Trans. Autom. Control}, 65\penalty0 (3):\penalty0
  909--924, 2019.

\bibitem[Trentelman et~al.(2020)Trentelman, van Waarde, and
  Camlibel]{trentelman_informativity_2020}
Harry~L Trentelman, Henk~J van Waarde, and M~Kanat Camlibel.
\newblock An informativity approach to data-driven tracking and regulation.
\newblock \emph{arXiv preprint arXiv:2009.01552}, 2020.

\bibitem[Markovsky and D{\"o}rfler(2021)]{markovsky_behavioral_2021}
Ivan Markovsky and Florian D{\"o}rfler.
\newblock Behavioral systems theory in data-driven analysis, signal processing,
  and control.
\newblock \emph{preprint available at
  http://homepages.vub.ac.be/~imarkovs/publications/overview-ddctr.pdf}, 2021.

\bibitem[D\"orfler et~al.(2021{\natexlab{a}})D\"orfler, Tesi, and
  Persis]{dorfler_certainty_2021}
Florian D\"orfler, Pietro Tesi, and Claudio~De Persis.
\newblock On the certainty-equivalence approach to direct data-driven {LQR}
  design.
\newblock \emph{arXiv preprint arXiv:2109.06643}, 2021{\natexlab{a}}.

\bibitem[D\"orfler et~al.(2021{\natexlab{b}})D\"orfler, Coulson, and
  Markovsky]{dorfler_bridging_2021}
F.~D\"orfler, J.~Coulson, and I.~Markovsky.
\newblock Bridging direct \& indirect data-driven control formulations via
  regularizations and relaxations.
\newblock Technical report, arXiv:2101.01273, 2021{\natexlab{b}}.

\bibitem[Coulson et~al.(2021)Coulson, Lygeros, and
  Dorfler]{coulson_distributionally_2021}
Jeremy Coulson, John Lygeros, and Florian Dorfler.
\newblock Distributionally robust chance constrained data-enabled predictive
  control.
\newblock \emph{IEEE Trans. Autom. Control}, 2021.

\bibitem[Berberich et~al.(2020)Berberich, K{\"o}hler, M{\"u}ller, and
  Allg{\"o}wer]{berberich_data_2020}
Julian Berberich, Johannes K{\"o}hler, Matthias~A M{\"u}ller, and Frank
  Allg{\"o}wer.
\newblock Data-driven tracking {MPC} for changing setpoints.
\newblock \emph{IFAC-PapersOnLine}, 53\penalty0 (2):\penalty0 6923--6930, 2020.

\bibitem[Xue and {Matni}(2021)]{xue_data_2021}
A.~Xue and N.~{Matni}.
\newblock {Data-Driven} {System} {Level} {Synthesis}.
\newblock In \emph{Proc. Mach. Learn. Research}, volume 144, pages 1--12, 2021.

\bibitem[Alemzadeh et~al.(2021)Alemzadeh, Talebi, and
  Mesbahi]{alemzadeh_d3pi_2021}
Siavash Alemzadeh, Shahriar Talebi, and Mehran Mesbahi.
\newblock D3pi: Data-driven distributed policy iteration for homogeneous
  interconnected systems.
\newblock \emph{arXiv preprint arXiv:2103.11572}, 2021.

\bibitem[Muntwiler et~al.(2020)Muntwiler, Wabersich, Hewing, and
  Zeilinger]{muntwiler_data_2020}
Simon Muntwiler, Kim~P. Wabersich, Lukas Hewing, and Melanie~N. Zeilinger.
\newblock Data-driven distributed stochastic model predictive control with
  closed-loop chance constraint satisfaction.
\newblock \emph{arXiv preprint arXiv:2004.02907}, 2020.

\bibitem[Sturz et~al.(2020)Sturz, Zhu, Rosolia, Johansson, and
  Borrelli]{sturz_distributed_2020}
Yvonne~R. Sturz, Edward~L. Zhu, Ugo Rosolia, Karl~H. Johansson, and Francesco
  Borrelli.
\newblock Distributed learning model predictive control for linear systems.
\newblock In \emph{Proc. IEEE CDC}, pages 4366--4373, 2020.
\newblock \doi{10.1109/CDC42340.2020.9303820}.

\bibitem[Boyd et~al.(2010)Boyd, Parikh, Chu, Peleato, and
  Eckstein]{boyd_distributed_2010}
Stephen Boyd, Neal Parikh, Eric Chu, Borja Peleato, and Jonathan Eckstein.
\newblock Distributed {Optimization} and {Statistical} {Learning} via the
  {Alternating} {Direction} {Method} of {Multipliers}.
\newblock \emph{Found. Trends Mach. Learn.}, 3\penalty0 (1):\penalty0 1--122,
  2010.
\newblock ISSN 1935-8237, 1935-8245.
\newblock \doi{10.1561/2200000016}.

\bibitem[{Amo Alonso} and {Matni}(2020)]{amoalonso_distributed_2020}
C.~{Amo Alonso} and N.~{Matni}.
\newblock {Distributed} and localized closed-loop model predictive control via
  {System} {Level} {Synthesis}.
\newblock In \emph{Proc. IEEE CDC}, pages 5598--5605, 2020.
\newblock \doi{10.1109/CDC42340.2020.9303936}.

\bibitem[{Amo Alonso} et~al.(2021{\natexlab{a}}){Amo Alonso}, {Li}, {Matni},
  and {Anderson}]{amoalonso_implementation_2021}
C.~{Amo Alonso}, S.J. {Li}, N.~{Matni}, and J.~{Anderson}.
\newblock Distributed and localized model predictive control. {Part I}:
  Synthesis and implementation, 2021{\natexlab{a}}.
\newblock URL \url{https://arxiv.org/pdf/2110.07010.pdf}.

\bibitem[{Amo Alonso} et~al.(2021{\natexlab{b}}){Amo Alonso}, {Li}, {Matni},
  and {Anderson}]{amoalonso_theoretical_2021}
C.~{Amo Alonso}, S.J. {Li}, N.~{Matni}, and J.~{Anderson}.
\newblock Distributed and localized model predictive control. {Part II}:
  Theoretical guarantees, 2021{\natexlab{b}}.
\newblock In preparation.

\bibitem[Wang et~al.(2018)Wang, Matni, and Doyle]{wang_separable_2018}
Y.~Wang, N.~Matni, and J.~C. Doyle.
\newblock Separable and localized system-level synthesis for large-scale
  systems.
\newblock \emph{IEEE Trans. Autom. Control}, 63\penalty0 (12):\penalty0
  4234--4249, December 2018.
\newblock ISSN 0018-9286.
\newblock \doi{10.1109/TAC.2018.2819246}.

\bibitem[Anderson et~al.(2019)Anderson, Doyle, Low, and
  Matni]{anderson_system_2019}
James Anderson, John~C. Doyle, Steven~H. Low, and Nikolai Matni.
\newblock System level synthesis.
\newblock \emph{Annu. Rev. Control}, 47:\penalty0 364 -- 393, 2019.
\newblock ISSN 1367-5788.
\newblock \doi{https://doi.org/10.1016/j.arcontrol.2019.03.006}.

\bibitem[Willems and Polderman(1997)]{willems_introduction_1997}
J.C. Willems and J.W. Polderman.
\newblock \emph{Introduction to Mathematical Systems Theory: A Behavioral
  Approach}.
\newblock Springer, New York, 1997.

\bibitem[Yu et~al.(2020)Yu, Wang, and Anderson]{yu_localized_2020}
Jing Yu, Yuh-Shyang Wang, and James Anderson.
\newblock Localized and distributed $\mathcal{H}_2$ state feedback control,
  2020.
\newblock URL \url{https://arxiv.org/abs/2010.02440}.

\bibitem[Grant and Boyd(2014)]{cvx}
Michael Grant and Stephen Boyd.
\newblock {CVX}: Matlab software for disciplined convex programming, version
  2.1.
\newblock \url{http://cvxr.com/cvx}, 2014.

\bibitem[Gurobi~Optimization(2021)]{gurobi}
LLC Gurobi~Optimization.
\newblock Gurobi optimizer reference manual, 2021.
\newblock URL \url{http://www.gurobi.com}.

\bibitem[Conte et~al.(2012)Conte, Summers, Zeilinger, Morari, and
  Jones]{conte_computational_2012}
Christian Conte, Tyler Summers, Melanie~N. Zeilinger, Manfred Morari, and
  Colin~N. Jones.
\newblock Computational aspects of distributed optimization in model predictive
  control.
\newblock In \emph{2012 {IEEE} 51st {IEEE} {Conf.} on {Decision} and {Control}
  ({CDC})}, pages 6819--6824, Maui, HI, USA, December 2012. IEEE.
\newblock ISBN 978-1-4673-2066-5 978-1-4673-2065-8 978-1-4673-2063-4
  978-1-4673-2064-1.
\newblock \doi{10.1109/CDC.2012.6426138}.

\end{thebibliography}

\end{document}